\relax
%File: formatting-instruction.tex
\documentclass[letterpaper]{article} % DO NOT CHANGE THIS
\usepackage{aaai20}  % DO NOT CHANGE THIS
\usepackage{times}  % DO NOT CHANGE THIS
\usepackage{helvet} % DO NOT CHANGE THIS
\usepackage{courier}  % DO NOT CHANGE THIS
\usepackage[hyphens]{url}  % DO NOT CHANGE THIS
\usepackage{graphicx} % DO NOT CHANGE THIS
\urlstyle{rm} % DO NOT CHANGE THIS
  % DO NOT CHANGE THIS
\usepackage{graphicx}  % DO NOT CHANGE THIS
\frenchspacing  % DO NOT CHANGE THIS
\setlength{\pdfpagewidth}{8.5in}  % DO NOT CHANGE THIS
\setlength{\pdfpageheight}{11in}  % DO NOT CHANGE THIS
\nocopyright
%PDF Info Is REQUIRED.
% For /Author, add all authors within the parentheses, separated by commas. No accents or commands.
% For /Title, add Title in Mixed Case. No accents or commands. Retain the parentheses.
 \pdfinfo{
/Title (Collaborative Machine Learning Markets)
/Author (Olga Ohrimenko, Shruti Tople, Sebastian Tschiatschek)
} %Leave this	

\pagestyle{plain}
\setcounter{secnumdepth}{2} %May be changed to 1 or 2 if section numbers are desired.

% The file aaai20.sty is the style file for AAAI Press 
% proceedings, working notes, and technical reports.
%
\setlength\titlebox{2.5in} % If your paper contains an overfull \vbox too high warning at the beginning of the document, use this
% command to correct it. You may not alter the value below 2.5 in
\title{Collaborative Machine Learning Markets
with Data-Replication-Robust Payments}
%\Large{Robust-to-Data-Replication Payment Division}}
%\Large{Payment Division with Robustness to Data Replication}}
%Your title must be in mixed case, not sentence case. 
% That means all verbs (including short verbs like be, is, using,and go), 
% nouns, adverbs, adjectives should be capitalized, including both words in hyphenated terms, while
% articles, conjunctions, and prepositions are lower case unless they
% directly follow a colon or long dash
\author{\Large \textbf{\makebox[0.22\linewidth]{Olga Ohrimenko} \makebox[0.22\linewidth]{Shruti Tople} \makebox[0.22\linewidth]{Sebastian Tschiatschek}}\\
\Large \textbf{}\\ % All authors must be in the same font size and format. Use \Large and \textbf to achieve this result when breaking a line
\large{Microsoft Research}\\ %If you have multiple authors and multiple affiliations
% use superscripts in text and roman font to identify them. For example, Sunil Issar,\textsuperscript{\rm 2} J. Scott Penberthy\textsuperscript{\rm 3} George Ferguson,\textsuperscript{\rm 4} Hans Guesgen\textsuperscript{\rm 5}. Note that the comma should be placed BEFORE the superscript for optimum readability
\{oohrim,t-shtopl,setschia\}@microsoft.com % email address must be in roman text type, not monospace or sans serif
}

%!TEX root = main.tex

\usepackage{amsmath,amssymb,amsthm}
\usepackage{paralist}
\usepackage{wrapfig}
\usepackage{lineno}
\usepackage{thmtools}
\usepackage{subcaption}
\usepackage{xcolor}

\newif\ifFull
\newif\ifShort

\Fullfalse

\ifFull
\Shortfalse
\else
\Shorttrue
\fi

\newcommand{\model}{\ensuremath{\mathcal{M}}}
\newcommand{\trdata}{\ensuremath{\mathcal{X}}}
\newcommand{\valdata}{\ensuremath{\mathcal{V}}}
\newcommand{\cat}{\ensuremath{\oplus}}
 % payment divison function
\newcommand{\gain}{\ensuremath{\mathcal{G}}} % gain in performance

\newcommand{\shap}{\psi}
\newcommand{\shapR}{\shap^R}
\newcommand{\shapRold}{\shap^R_{\mathsf{old}}}
\newcommand{\shapRnew}{\shap^R_{\mathsf{new}}}
\newcommand{\fee}{A}
\newcommand{\price}{a}
\newcommand{\vsing}{v}
\newcommand{\vmult}{w}
\newcommand{\parties}{\mathbf{M}}
\newcommand{\partiesR}{\mathbf{M}^R}
\newcommand{\partiesD}{\mathbf{D}}

%% Theorems, etc.
\newtheorem{definition}{Definition}

\newtheorem{claim}{Claim}
\newtheorem{condition}{Condition}
\newtheorem{lemma}{Lemma}

\newtheorem{corollary}{Corollary}

\begin{document}

\maketitle

\begin{abstract}
  We study the problem of collaborative machine learning markets where multiple parties can achieve improved performance on their machine learning tasks
  by combining their training data.
  We discuss desired properties for these machine learning markets in terms of fair revenue distribution and potential threats, including data replication.
  We then instantiate a collaborative market for cases where parties share a common machine learning task and where parties' tasks are different. Our marketplace incentivizes
parties to submit high quality training and true validation data.
To this end, we introduce a novel payment division function that is robust-to-replication and customized output models that perform well only on requested machine learning tasks.
  In experiments, we validate the assumptions underlying our theoretical analysis and show that these are approximately satisfied for commonly used machine learning models.
\end{abstract}

%%%%%%%%%%%%%%%%%%

%%%%%%%%%%%%%%%%%%%%%%
%% INTRODUCTION     %%
%%%%%%%%%%%%%%%%%%%%%%

%!TEX root = main.tex

\section{Introduction}

One of the main obstacles for training well-performing machine learning models  is the limited availability of sufficiently diverse labeled training data.
However, the data needed to train good models often exists but is not easy to leverage as it is distributed and owned by multiple parties.
For instance, in the medical domain, important data about patients that could be used for learning diagnostic support systems for cancer might be in possession of different hospitals, each of which holding different data, (e.g., from a specific geographical region with different demographics).
Typically, by pooling the available data, the hospitals could train better machine learning models for their application than they could using only their own data.
As all hospitals would benefit from a better machine learning model obtained through data sharing, there is a need for collaborative machine learning.

Naturally, this type of collaboration raises questions in terms of how to incentivize parties to participate in such a collaborative machine learning effort.
Unfortunately,  in many applications there is a clear and natural incentive for parties to provide quality data or share their data to begin with.
Financially rewarding parties to incentivize participation seems to be natural in such cases but has to be done with great care.
For example, a fixed price per data point could motivate parties to gather large amounts of low quality or fake data if there is no mechanism to control data quality.
Another reason that may dis-incentivize parties from sharing data could stem
from privacy and integrity concerns regarding the use of party's data once it is shared.
For example, once a party's data is released it can be easily reused or  sold.

\begin{figure}[!tbp]
	\centering
	\includegraphics[width=1.\linewidth]{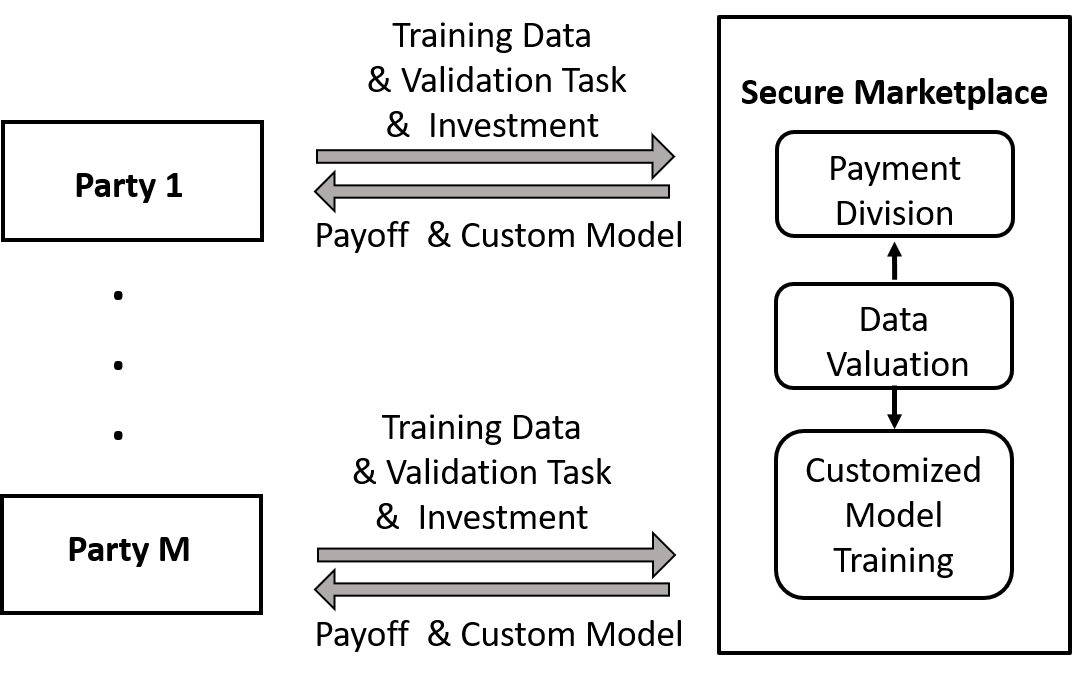}
	\caption{Collaborative marketplace setup.}
	\label{fig:setting}
\end{figure}

To overcome these challenges and enable collaborative machine learning in the outlined setting,
there is a need for a secure machine learning marketplace for joint model training that guarantees fair incentives for participation and ensures secure data handling.
In this work, we propose a cloud-based collaborative machine learning platform accessible to parties for submitting data and machine learning tasks.
Submitted training data represents the data that a party is willing to contribute/sell, while submitting the validation data can be seen as a specification of the machine learning model a party is willing to buy.
After a \emph{trade} in this market, a participating party obtains a model trained on the data available to the market and customized for its task.
Crucially, there is no sharing of data and parties are only provided this customized model (or query interface).
As a result, only information relevant to the validation task
is released through the model, limiting the possibility of copying and reusing the data for other tasks.
Such markets allow multiple parties to jointly train machine learning models based on the training data provided by all of the parties and achieve improved performance on their own tasks.
Parties pay \emph{to} the market for the improvement on their validation tasks and get paid \emph{by} the market for the contribution of their training data to the tasks of others.
The market can support a single validation task scenario, for example, where hospitals bring together their data to train a single model
for detecting cancer. 
Furthermore, it also supports scenarios where one's data can contribute to multiple tasks.
An overview of our envisioned marketplace is shown in Figure~\ref{fig:setting}.

Our market is enabled by three main components:
\begin{inparaenum}
  \item \emph{Data valuation.}
Each model in the market is trained on the data that is best suited for
the specified validation task.
Similar to recent and concurrent work~\cite{jia2019towards,ghorbani2019} we can use Shapley values~\cite{shapley1953value}, a solution concept from game theory discussed later, to match data to tasks.
  \item \emph{Customized model training.} We ensure that the model that a party receives is only suited for its own task but not the other tasks available on the market.
As a result each party is incentivized to provide its true validation task.
  \item \emph{Payment division.} Each party receives a reward proportional to how useful its data is for training
other models. We again use Shapley values, however, in this case to determine fair payoffs.
\end{inparaenum}

 One of the key challenges in designing such data marketplaces comes from the very nature of data, i.e., free replication.
 This means that the reward of a party submitting copies of the same dataset should not be more than the party submitting it once.
 Indeed the authors of~\cite{agarwal2019marketplace}
 also point out that, if used naively for machine learning, Shapley value is not robust to replication (albeit in a different market setup than ours).
 We design a market that is robust-to-replication.
The key idea behind our approach is to allow a party to contribute data only if it also
submits a validation task
for which it requires a trained model.
Submitting a task, in turn, requires a participation fee.
Hence, for every replica, a party has to pay a participation fee that depends on its validation task.
As a result, the fee it pays for the improvement on its task balances the payoff it gets from the use of its training data.

Finally, we ensure that our cloud-based market is secure and
can be trusted by the parties.
That is, the market itself cannot resell, copy or release the data or the trained models.
Such a market can be instantiated using secure hardware, such as Intel SGX~\cite{sgx}, which
allows parties to submit their data encrypted
and ensure that it is only decrypted and processed in secure enclaves
and data is always encrypted in memory. It also provides attestation capabilities that
parties can use to verify the integrity of the market (e.g.,
that the data was used only to train specific models and was not copied in plaintext
outside of an enclave).
See~\cite{Ohrimenko2016} for details.

We summarize our contribution as follows:
\begin{enumerate}
  \item \emph{Marketplace Definition}: We introduce a collaborative marketplace to sell data and buy machine learning models for learning single and multiple validation tasks.
\ifFull
  \item \emph{Payment Division}: We propose a novel and robust-to-replication payment division function based on Shapley values.
\else
  \item \emph{Payment Division}: We propose a novel and robust-to-replication payment division function.
\fi
  \item \emph{Customized model training}: We propose to release customized models to each party and not data in order to prevent malicious parties from reselling the data while
  incentivizing them to submit honest validation tasks (which in our market can be seen as an indirect bid on the data).
  \item  \emph{Evaluation}: We empirically evaluate the properties of our marketplace.
 Our experimental results confirm that our marketplace generates fair payoffs, customized models and is robust to replication. 
\end{enumerate}

%%%%%%%%%%%%%%%%%%%%%%
%% PRELIM      %%
%%%%%%%%%%%%%%%%%%%%%%

%!TEX root = main.tex

\section{Background and Notation}

\ifFull
\subsection{Machine Learning Models and Evaluation}
\else
\paragraph{Machine Learning Models and Evaluation.}
\fi
\label{sec:mlbg}

In supervised machine learning, we often consider training of models for regression or classification tasks, i.e.,
learning of a prediction function $\model \colon \Omega \rightarrow \mathbb{R}^k$ or $\model \colon \Omega \rightarrow [K]$, respectively, where $\Omega$ is some input space, $k \in \mathbb{N}$, and $[K]=\{1, \ldots, K\}$.
We introduce concepts for classification problems only but the same concepts apply to regression problems.
The model $\model$ is a part of some hypothesis space $\mathcal{H}$, e.g., the set of one layer neural-networks with a fixed number of hidden neurons and sigmoid activations.
These functions are learned from labeled training data $\trdata$, i.e., collections/sets of tuples of the form $(\omega, z)$, where $z \in [K]$ for classification.
Learning in that context commonly refers to minimizing a sample-wise loss function $l\colon \Omega \times [K] \rightarrow \mathbb{R}$, e.g., the cross-entropy loss, overloading notation: $\model(\trdata) = \arg \min_{\model' \in \mathcal{H}} \tfrac{1}{|\trdata|} \sum_{(\omega, z) \in \trdata} l(\model'(\omega),z)$.
The goal of the learning process is to identify functions $\model$ that perform well on unseen data, i.e., test data, and, for instance, achieve good classification accuracy. 
As a proxy, for estimating the performance on test data, we use validation data~$\valdata$.
The average classification performance on the validation data is $\gain(\valdata,\model(\trdata)) = \tfrac{1}{|\valdata|} \sum_{(\omega, z) \in \valdata} \mathbf{1}_{\model(\omega) = z}$, where $\mathbf{1}$ is the indicator function.
Clearly, the performance measure is application dependent and can, for instance, also be the RMSE, ranking accuracy, etc.
We refer to $\gain$ as (performance) gain function.

\ifFull
{\bfseries Properties}
\else
{\bfseries Properties of $\gain$~.}
\fi
In the paper and for simplicity, we assume an idealized non-negative gain function $\gain$ and idealized dependencies of the model on the training data.
For training datasets $\trdata, \trdata'$ we assume that:
\begin{enumerate}[(i)]
  \item \emph{replicated data does not change performance}: \label{itm:assumption_duplicates} $\forall \trdata, \trdata'\colon \model(\trdata) = \model(\trdata \cat \trdata)$; \label{lbl:prop1}
  \item \emph{monotonicty}: $\gain(\valdata, \model(\trdata))) \leq \gain(\valdata, \model(\trdata \cat \trdata')))$;\label{lbl:prop2}
  \item \emph{supermodularity\footnote{In our theoretical analysis, we show that supermodularity in combination with our other assumptions is sufficient to prove that our marketplace instantiations
  in \S\ref{sec:instantiations} are robust
  to data replication as introduced in \S\ref{sec:properties}. However, it is not a necessary condition.}}: $\gain(\valdata, \model(\trdata \cup \{x\})) - \gain(\valdata, \model(\trdata)) \leq \gain(\valdata, \model(\trdata' \cup \{x\})) - \gain(\valdata, \model(\trdata'))$ for $\trdata \subseteq \trdata'$ and $x \not\in \trdata'$;\label{lbl:prop3}
  \item \emph{boundedness}: $\gain(\valdata, \model(\trdata)) \leq 1$.
\end{enumerate}
Here, (\ref{lbl:prop1}) captures that duplication of training data does not change the learned model.
This is, for instance, true for 1-NN classifiers;
(\ref{lbl:prop2}) characterizes that additional data either improves or does not change the performance;
(\ref{lbl:prop3}) captures the collaborative market setting where
each party provides complementary data that contributes towards a ML task
even in the presence of other parties' data.
\ifFull
We validate these properties on logistic regression and neural networks in our experiments, cf.\ \S\ref{sec:experiments}.
\fi

%%%%%%%%%%%
%%%%%%%%%%%

\ifFull
\subsection{Shapley Values for Fair Payoffs}
\else
\paragraph{Shapley Values for Fair Payoffs.}
\fi
Consider a (machine learning) task which $M$ parties $\parties = \{1, \ldots, M \}$ aim to solve with a joint effort.
To quantify the value of the contribution of each party towards solving the task, we consider a characteristic function $v\colon 2^{\parties} \rightarrow \mathbb{R}$.
For every set $S \subseteq \parties$ of parties, $v(S)$ quantifies how well the parties in $S$ can solve the task, e.g., $v(S)$ could be the prediction accuracy of the best model the parties in $S$ can train by combining their training data, i.e., $v(S)=\gain(\valdata, \cup_{i \in S} \trdata_i),$ where $\trdata_i$ is the $i$th party's training data.
When the characteristic function
depends on the maximum number of participants,
we overload the notation and use $\vmult(S, \parties)$
to denote the value parties in the set $S\subseteq\parties$ bring to the market
with participants in $\parties$.

If parties are to be compensated for helping to solve a machine learning task, it is natural to ask what a \emph{fair} payoff for each parties' effort is. 
To this end, we consider Shapley values~\cite{shapley1953value}, i.e., unique payoffs, studied in game theory for collaborative games, that satisfy certain natural fairness properties discussed later. 
The Shapley value for characteristic function $v$ and party $i \in \parties$ is

\ifShort
\vspace{-5pt}
\fi
\begin{linenomath*}
\begin{align}
      \shap(v,i) \!= \!\!\!\!\!\!\! \sum_{S \subseteq \parties \setminus \{ i \}} \!\!\!\!\!\!\!\frac{|S|! (M-|S|-1)!}{M!}  \big( v(S\!\cup\!\{i\}) \!-\! v(S) \big),
      \label{eq:shapley}
\end{align}
\end{linenomath*}
i.e., $\shap(v,i)$ quantifies the average marginal contribution of party $i$ wrt all possible subsets of parties.
If $v(\parties) \neq 1$, Shapley values
can be normalized to sum to $1$ by scaling them by $1/v(\parties)$.
When clear from the context which characteristic function is used, we use $\shap(i)$.
\ifFull

\fi
\ifFull
For a fixed characteristic function $v$, Shapley values $\shap(i)$ are the unique payoffs satisfying the following properties:
(1) \emph{Efficiency}, i.e., $\sum_{i=1}^M \shap(i) = v(M)$;
(2) \emph{Symmetry}, i.e., if for all $S \subseteq \parties \setminus \{i,j\}$ it holds that $v(S \cup \{i\}) = v(S \cup \{j\})$, then $\shap(i) = \shap(j)$;
(3) \emph{Linearity}, i.e., if two characteristic functions $v$ and $w$ are combined linearly, the same should hold for their payoffs s.t.\ $\shap(v+w,i)=\shap(v,i) + \shap(w,i)$ and $\shap(\alpha v, i) = \alpha \shap (v, i)$;
(4) \emph{Null player}, i.e., if for all $S \subseteq \parties \setminus \{i\}$ it holds that $v(S \cup \{i\}) = v(S)$, then $\shap(i)=0$.
These properties ensure that parties are paid equally for equal contributions.
\fi
\ifShort
For a fixed characteristic function $v$, Shapley values $\shap(i)$ are the unique payoffs satisfying properties of \emph{efficiency, symmetry, linearity and null player} (explained in \S\ref{sec:properties}). 
These properties ensure that parties are paid equally for equal contributions and all gains are distributed among the parties.
\fi

\ifFull
Note that we will be using Shapley values with different characteristic functions in our marketplace. 
In particular, we use a different characteristic functions for deciding on which training data to use for training customized models as compared to the characteristic function used for computing the payoffs for parties, cf.\ Sections~\ref{sec:singletask} and~\ref{sec:multitask} versus \S\ref{sec:modeltrain}.
This is necessary as Shapley values itself are not robust to replication.
\else
We will use Shapley values with different characteristic functions in our marketplace: 
$u$ for deciding which data to use for training customized models (\S\ref{sec:modeltrain}) vs.~$\vsing$ and $\vmult$ for computing the payoffs for single validation tasks (\S\ref{sec:singletask}) and multiple validation tasks (\S\ref{sec:multitask}) respectively.
This is necessary as Shapley value itself is not robust to replication.
\fi

%%%%%%%%%%%%%%%%%%%%%%
%% PROPERTIES       %%
%%%%%%%%%%%%%%%%%%%%%%

%!TEX root = main.tex

\section{A Collaborative Marketplace}
\label{sec:collabmarket}

 In our marketplace, we consider $M$ parties $P_1, \ldots, P_M$ which aim to collaborate
 towards training machine learning models for their tasks.
 Each party simultaneously takes the role of a seller and a buyer.
 The $i$th party has training data $\trdata_i$ and validation data $\valdata_i$.
 The performance of a machine learning model $\model(\trdata)$ trained on some training data $\trdata$ is evaluated
 using a performance/gain function $\mathcal{G}(\valdata_i, \model(\trdata)) \in [0,1]$.
 
 The goal of the market is to provide party $i$ with a customized model
 trained on the subset of (or potentially all) datasets of other parties' that best fits
 its task (based on $\valdata_i$). At the same time the market
 uses $\trdata_i$ to train models
 of parties where this dataset fits the corresponding task (based
 on validation data of other parties).
  
  \begin{definition}[Marketplace] 
  A \emph{marketplace} is a tuple $(\mathcal{P}, \textrm{PD})$, where $\mathcal{P} = (P_1, \ldots, P_M)$ is the list of parties engaging with the market place and $\textrm{PD}$ is the payment division function.
\end{definition}
  
  We consider the following interaction with the data marketplace:
  \begin{enumerate}
    \item  Parties $P_1, \ldots, P_M$ arrive.
    \item The marketplace collects all training data sets $\trdata_1, \ldots, \trdata_M$ and validation tasks $\valdata_1, \ldots, \valdata_M$.
    \item Every party pays the market a participation fee $\fee_i$ that is determined proportional to 
    unit increase in personal performance gain
    $\fee_i = 1 - \gain(\valdata_i, \model(\trdata_i))$, i.e., valuation for increasing performance to $100\%$ on their validation data.
    The market can also set a fixed value for a unit increase $c> 0$ in performance,
    such that $\fee_i = c \cdot (1 - \gain(\valdata_i, \model(\trdata_i)))$.
    \item The marketplace trains a machine learning model $\model^i$ for every party $i$ where $\model^i = \model(\valdata_i, \cat_{j \in S_i} \trdata_j)$ and $S_i \subseteq \parties$.
    \item The model $\model^i$ is shared with party $i$.
           Let $\price_i = \gain(\valdata_i, \model^i) - \gain(\valdata_i, \model(\trdata_i))$.
      Party $i$ receives payoff $t_i$ which depends on the increase in performance on its validation data $\valdata_i$ (i.e., they receive $\fee_i - \price_i$)
     and how much its data helps in improving performance of models for other validation tasks, $b_i$.
     Hence, in total party $i$ gains (or loses) $(\fee_i - \price_i) + b_i - \fee_i = b_i - \price_i$.
  \end{enumerate}

% \paragraph{Single validation task}
% This is a special case of the marketplace where all parties agree on the same
% validation set $\valdata$ and there is only one model $\model$ that is trained and returned to all parties.

 \ifFull
 \paragraph{Market participation criteria}
 We avoid the case of pure buyers and sellers by allowing
 a party $i$ to participate only if there exists at least one party $j$ such that
  $\gain(\valdata_j, \model(\trdata_i \cat \trdata_j)) > \gain(\valdata_j, \model(\trdata_j)) $
  and there exists a subset of parties $\{i\} \subset S \subseteq [M]$ 
 such that $\gain(\valdata_i, \model(\cat_{j \in S} \trdata_j)) > \gain(\valdata_i, \model(\trdata_i))$.
 We note that for a single validation task these constraints are simplify to
 $\gain(\valdata, \model(\trdata_i \cat \trdata_j)) > \gain(\valdata, \model(\trdata_j))$
 and $\gain(\valdata, \model(\cat_{j \in S} \trdata_j)) > \gain(\valdata, \model(\trdata_i))$.
 \fi
  
\subsection{Desired Properties} \label{sec:properties} In the following we enumerate desired properties for a machine learning marketplace such that participating parties receive \emph{fair} payoffs for their engagement and benefit from participation.

\paragraph{Revenue division and payment.}
 Our list of properties is inspired by the  
``standard axioms of fairness'' since they are the de facto method to assess
the marginal value of goods (i.e., features in our setting) in a cooperative game (i.e., prediction task in our setting).
They include:
\ifFull
\begin{description}
      \item \emph{Balance:} $\sum b_i = \sum a_i$.
      \item \emph{Symmetry:} if two parties $i$ and $j$ enter the market with same training and validation sets, then their payoff $t_i = t_j$
      \item \emph{Zero element buyer:} if there is a party whose performance does not increase, it should at least get its participation fee back, $t_i = b_i$.
\end{description}
\else
\emph{Balance:} $\sum b_i = \sum a_i$.
\emph{Symmetry:} if two  parties  $i$ and $j$ enter the market with same training and validation sets then the their payoff, $t_i = t_j$.
\emph{Zero element buyer:} if there is a party whose performance does not increase, it should at least get its participation fee back, $t_i = A_i$.
\fi

\paragraph{Incentives.} 
Party $i$ decides on whether to enter the collaborative market or not
and what $\valdata_i$ and $\trdata_i$ to contribute.
As a result, the marketplace should incentivize the parties to join the market
with good training data and honest validation data:

\emph{Joining the market:} The market should incentivize new parties to join.
Our setting incentivizes this as follows.
A new party brings a new task to the market, hence, existing parties'
data may be useful for this task, increasing their payoff $b_i$.
At the same time, the new party is also bringing new training data which can increase performance of tasks submitted in the existing market, resulting in an increased payoff.
Hence a party joining the market can benefit from increased utility.

\emph{Validation data:} A dishonest party can try to manipulate training and validation data (including their relationship) in order to gain
more than it would with its true training and validation datasets.
For example, $\valdata_i$ can be seen as an implicit bid that party $i$
places on the model $\model^i$ it will obtain.
There can be a case that
\ifFull
\begin{equation}
\gain(\valdata', \model^i) - \gain(\valdata', \model(\trdata_i)) > \gain(\valdata_i, \model^i) - \gain(\valdata_i, \model(\trdata_i))
\label{eq:bestleastmodel}
\end{equation}
\else
$\gain(\valdata', \model^i) - \gain(\valdata', \model(\trdata_i)) > \gain(\valdata_i, \model^i) - \gain(\valdata_i, \model(\trdata_i))$.
\fi

If it is the case, $\model^i$ has higher utility than what is determined by
$\valdata_i$.
To this end, the marketplace needs to ensure that
the model $\model^i$ that is returned to the party does not allow for existence
of $\valdata'$ in the current marketplace, incentivizing the party to provide the best $\valdata_i$
to get the best utility model.
We enforce this incentive by training models that are customized for a specific task,
i.e., maximizing their accuracy towards a specific task while minimizing their accuracy on all other tasks in the market.

\emph{Training data:} The market needs to ensure that the payment $b_i$
that party $i$ receives for the use of its data to train other models
incentivizes it to provide its best $\trdata_i$.
We note that our market does not have an explicit way for parties to bid or price training data.

\paragraph{Robustness to replication.}
Parties may not behave honestly and
may replicate their data and create new replica parties to join the market
on their behalf.
The market should be robust to replication.
That is, a party that replicates its training data
should not earn more
than it would in the original market.
This is a crucial property for any data market since
data as compared to physical goods is easily replicable.
This problem was already highlighted for a different marketplace setup in~\cite{agarwal2019marketplace}.

\paragraph{Privacy and Integrity.}
Since data can be easily leaked and manipulated compared to physical goods,
the market should provide assurance to its participants
about the correct handling of their data and model training.
At the very least it should ensure that information
about the training data $\trdata_i$ of party $i$ is revealed only to other parties
through models for validation tasks where $i$th training data is useful.
The information about $\valdata_i$ has to be kept secret
as well.
As we mention in the introduction, these properties
can be provided if the market infrastructure is instantiated using Trusted Execution Environments
(or enclaves):
parties submit their data encrypted and allow only attested (verified) code
to decrypt, compute on it, and finally encrypt the
models under the encryption keys of the parties
who can access these models (i.e., the party who specified the
validation data for this model).

%%%%%%%%%%%%%%%%%%%%%%
%% MARKETPLACE      %%
%%%%%%%%%%%%%%%%%%%%%%

%!TEX root = main.tex

\ifFull
\section{Single Validation Task}  
\label{sec:singletask}
\else
\section{Market Instantiations}  
We provide market instantiations for both a single validation task among all the parties and multiple validation tasks.
\label{sec:instantiations}

\subsection{Single Validation Task}  
\label{sec:singletask}
\fi

Let us describe the marketplace for a single validation task~$\valdata$ that all $M$ parties agree on.
We slightly abuse the notation by letting $\model_S =  \model(\cat_{k \in S}\trdata_k)$
be the model trained on data from parties in $S$ for task $\valdata$.
The parties agree on the same marginal payment per increase of
the model performance they gain (for example, per percentage increase):
if $\gain(\valdata,\model_{\parties}) = 1$ then
party $i$ pays the amount proportional to
$\fee_i = 1- \gain(\valdata,\model_{i})$.
Hence, the largest amount that can be distributed among
market participants is $\sum_{i\in\parties} \fee_i$.
Recall that, when entering the market the party submits $\trdata_i$
and fee $\fee_i$.
After the market completes,
$i$ obtains $\model_{\parties}$
and payout $b_i \ge 0$.

\ifFull
In \S\ref{sec:modeltrain}
we explain how to train $\model_{\parties}$ from
parties' datasets
and use the rest of the section to describe
how we instantiate the market,
compute the payoffs and
show that the market is robust to replication.
The payment division of this market is computed using a Shapley value $\shap(v,i)$,
hence, we start with describing the characteristic function~$v$.
\else
In \S\ref{sec:modeltrain}
we explain how to train $\model_{\parties}$ from
parties' datasets
and here describe
how we instantiate the market,
compute the payoffs using Shapley and
show that the market is robust to replication.
\fi

\ifFull
\subsection{Characteristic Function}  
\else
\paragraph{Characteristic Function.}
\fi
\label{sec:charfuncsingle}
Our characteristic function captures 
the value of data to parties in a set $S$ for the task $\valdata$ as the value of the model trained on the data as well as the
value the model brings to every party.
As a result the characteristic function for this market is defined as
\begin{eqnarray}
\vsing(S) =  \underbrace{\gain\big(\valdata; \model_{S})}_{\text{value of the model}} + \sum_{j \in S} \big[ \underbrace{\gain\big(\valdata; \model_S) - \gain\big(\valdata; \model_j)}_{\text{model value for party $j$}} \big]
\label{eq:charS}
\end{eqnarray}
This function can be seen as the value
of the model trained on the datasets of all parties in $S$ plus the marginal gains
for each party.
Note that for a single party the value of the data is expressed as the value
of the model trained on its own training dataset.

\newcommand{\totalval}{\mathbf{a}}

\ifFull
\subsection{Payment Division}
\else
\paragraph{Payment Division.}
\fi
\label{sec:singlepaydiv}
The total amount, $\totalval$, that is distributed among the participants depends
on the individual gains obtained from the final model.
Let $\price_i = \gain(\valdata, \model_{\parties}) - \gain(\valdata, \model_{i})$.
Then,
\ifFull
$$\totalval = \sum_{i\in\parties} \price_i$$
\else
$\totalval = \sum_{i\in\parties} \price_i$.
\fi
We use (normalized) Shapley values for characteristic function~$\vsing$ to determine the distribution
of~$\totalval$ to each party~$i$:
$b_i = \totalval \times \shap(\vsing,i)$.
To simplify the notation we use $\shap(i)$ to denote $\shap(\vsing,i)$.
In total, party~$i$ obtains $t_i = (\fee_i - \price_i) + b_i$
where $(\fee_i - \price_i)$ is the return of the original investment
if the final model performance is not~$1$.
Hence, party $i$ gains/loses the following amount 
by participating in the market:
\ifFull
\[(\fee_i - \price_i) + b_i - \fee_i = b_i - \price_i=
\shap(i)\totalval - \price_i = \shap(i)(\totalval - \price_i) - (1-\shap(i)) \price_i\]
\else
$(\fee_i - \price_i) + b_i - \fee_i = \shap(i)\totalval - \price_i$.
\fi
Note that $i$ gets $\shap(i)$ portion from each $\price_j$ and it pays $1-\shap(i)$ of~$\price_i$ to the market.

\ifFull
\emph{Example with two parties:}
The characteristic function for the case of two parties $i = \{1,2\}$ is (abusing the set notation of the input):
\ifFull
$$\vsing(i) =  \gain\big(\valdata; \model_{i}) \quad \vsing(1,2) = 
3 \gain\big(\valdata; \model_{1,2}) -  \gain\big(\valdata; \model_1)  -  \gain\big(\valdata; \model_2) $$
\else
$\vsing(i) =  \gain\big(\valdata; \model_{i}) \quad \vsing(1,2) = 
3 \gain\big(\valdata; \model_{1,2}) -  \gain\big(\valdata; \model_1)  -  \gain\big(\valdata; \model_2) $
\fi
The Shapley values using $v$ are:
\ifFull
$$\shap(v,1) =
\frac{1}{2}  (\vsing(1,2) - \vsing(2) +  \vsing(1))
\quad \shap(\vsing,2) =  
\frac{1}{2}  (\vsing(1,2) - \vsing(1) +  \vsing(2))$$
\else
$\shap(v,1) =
\frac{1}{2}  (\vsing(1,2) - \vsing(2) +  \vsing(1))$, $\shap(\vsing,2) =  
\frac{1}{2}  (\vsing(1,2) - \vsing(1) +  \vsing(2))$.
\fi
\ifFull
\fi
  
If $a'_1 > a'_2$ then $\gain(\valdata, \model_{1}) < \gain(\valdata, \model_{2})$, $\vsing(1) < \vsing(2)$. Then $\shap(\vsing,2) > \shap(\vsing,1)$ and hence $b_2 > b_1$.
In total party $1$ pays
$(a'_1 +a'_2) \times \shap(\vsing,1) - a'_1 = 
a'_2 \times \shap(\vsing,1)  - a'_1 \times \shap(\vsing,2)$.
\fi

\ifFull
\subsection{Properties}
\else
\paragraph{Properties.}
\fi

Our single-task market has the following properties:
\emph{Balance:} $\sum_i (b_i - \price_i) = \sum_i \totalval \times \shap(i) - \sum_i \price_i = \totalval \times 1 - \totalval  = 0$.
\ifFull

\fi
\emph{Symmetry:} This follows from using Shapley value to calculate payout $b_i$ and $\price_i$
would be the same for the parties with same data.
\ifFull

\fi
\emph{Zero element buyer:}
If $i$ does not benefit from other parties, that is  $\gain(\valdata,\model_i) = \gain(\valdata,\model_\parties)$,
then $\fee_i$ is returned to $i$ since $\price_i = 0$. 
\ifFull

\fi

\ifFull
\subsubsection{Robustness to replication}
\else
\paragraph{Robustness to replication.}
\fi
%
%%%%%%%%%%%%%%%%%%%%%%%%%%%%%%
\ifFull
We show that our single-task market instantiation is robust to replication.
Recall that a market is robust if
a party does not gain a higher payoff in the market with the replicas
compared to its gain the original market.
Let $M$ be the number of parties in the market before replication.
Let $i$ be the party that replicates itself.
Our proof proceeds as follows.
We first derive the payoff that $i$ needs
to make in the replicated market in order
to break even (Condition~\ref{cond:cmpshap}) and
estimate what is the marginal contribution
of party $i$ for new coalitions
that are created due to the addition of its replica~$i'$ (Lemma~\ref{lemma:margcontrib}).
We then devise
the relationship between
the Shapley values of party~$i$ before
and after replication.
(Lemma~\ref{lemma:shapR}).
Finally we show that $\shapR$ of the new market
satisfies the condition on robustness to replication
(Theorem~\ref{thm:robustsingle}).

Let us consider the total payoff of $i$ when it replicates itself.
Let $\totalval$
and $\totalval^R$ 
be the values of the original and replicated markets.
By definition of market value, $\totalval^R = \totalval+\price_i$.
Let $t$ denote $i$'s payoff in the original market
and $t^R$ be its total payoff when it replicates itself.
Similarly, let $\shap()$ and $\shapR()$ denote Shapley values before
and after replication.
\ifFull
\begin{eqnarray*}
t^R = 2 \times \shapR(i) (\totalval^R - \price_i) - 2\times (1-\shapR(i)) \price_i = \\ 2 \times ( \shapR(i) \totalval^R - \phi_i(v)  \price_i - \price_i +\shapR(i) \price_i ) =\\ 2 \times ( \shapR(i) \totalval^R - \price_i ) 
\end{eqnarray*}
\else
$t^R = 2 \times \shapR(i) (\totalval^R - \price_i) - 2\times (1-\shapR(i)) \price_i = 2 \times ( \shapR(i) \totalval^R - \price_i )$.
\fi
In the market where each party is equivalent, that is $\forall k, \price_k = \price$,
replication does not help.
\ifFull
Using the symmetry property of Shapley value:
\begin{eqnarray*}
 t^R = 2 \times ( \shapR(i) \totalval^R - \price ) =  2 \times (\frac{1}{M+1}\sum^{M+1} \price - \price ) =  2 \times (\price - \price) = 0 
\end{eqnarray*}
\else
Using the symmetry property of Shapley value we can show that $t^R = 0$.

\fi
Let us consider the case when the contributions of the parties
are different.
Recall that:
$t=\shap(i)\totalval - \price_i$
and 
$t^R = 2 \times ( \shapR(i)(\totalval + \price_i) - \price_i )$.
Replication is useful to party $i$ only if $t < t^R$.
If this is the case, then:
\ifFull
\begin{eqnarray*}
t &<& t^R\nonumber\\
\shap(i)\totalval - \price_i &<& 2(\shapR(i)(\totalval+\price_i) - \price_i)\nonumber\\
\shap(i)\totalval + \price_i&<& 2\shapR(i)(\totalval+\price_i) \nonumber\\
\end{eqnarray*}
\else
$\shap(i)\totalval + \price_i< 2\shapR(i)(\totalval+\price_i)$ should hold.
\fi
\begin{condition}
%[Replication robustness]
Market in \S\ref{sec:singletask} is robust to replication if
$\shapR(i) \le \frac{\shap(i)\totalval + \price_i}{2(\totalval+\price_i)}
$.
\label{cond:cmpshap}
\end{condition}

Our main argument depends on the following lemmas
where we capture the influence of the replicated party on the computation of the Shapley value $\shapR(i)$.
All proofs can be found in the Appendix.

\begin{restatable*}{lemma}{margcontrib}
%\begin{lemma}
\label{lemma:margcontrib}
Let $i$ and $i'$ be replicas and let $i' \in S$ then
$v(S \cup \{i\})- v(S) \le \price_i$ for any set $S$.
%\end{lemma}
\end{restatable*}

\begin{restatable*}{lemma}{shapRlemma}
%\begin{lemma}
\label{lemma:shapR}
Let $\phi$ and $\phi^R$ be the Shapley values
of the original market and the market where party $i$ replicates itself. Then,
\begin{eqnarray*}
\shapR(i)  \le \frac{\shap(i) v(\parties)}{2(v(\parties) + \price_i)} + \frac{\price_i }{2(v(\parties) + \price_i)} 
= \frac{\shap(i) v(\parties) + \price_i }{2(v(\parties) + \price_i)} 
\end{eqnarray*}
%\end{lemma}
\end{restatable*}

\begin{corollary}
If $\price_i = 0$, then 
$\shapR(i)  \le \frac{\shap(i) v(\parties)}{2v(\parties)}  = \frac{\shap(i)}{2}$. 
\end{corollary}

\begin{restatable*}{theorem}{thmrobustsingle}
\label{thm:robustsingle}
%\begin{lemma}
Single market instantiation in \S\ref{sec:singletask} is robust to replication.
%\end{lemma}
\end{restatable*}
%%%%%%%%%%%%%%%%%%%%%%%%%%%%%%%%%%%%%
\else

Recall that a market is robust if
a party does not gain a higher payoff in the market with the replicas
compared to its gain in the original market.

Let us consider the total payoff of $i$ when it replicates itself once.
Let $\totalval$
and $\totalval^R$ 
be the values of the original and replicated markets.
By definition, $\totalval^R = \totalval+\price_i$.
Let $t$ denote $i$'s payoff in the original market
and $t^R$ be its total payoff when it replicates itself.
Similarly, let $\shap(\cdot)$ and $\shapR(\cdot)$ denote Shapley values before
and after replication.
\ifFull
\begin{eqnarray*}
t^R = 2 \times \shapR(i) (\totalval^R - \price_i) - 2\times (1-\shapR(i)) \price_i = \\ 2 \times ( \shapR(i) \totalval^R - \phi_i(v)  \price_i - \price_i +\shapR(i) \price_i ) =\\ 2 \times ( \shapR(i) \totalval^R - \price_i ) 
\end{eqnarray*}
\else
$t^R = 2 \times \shapR(i) (\totalval^R - \price_i) - 2\times (1-\shapR(i)) \price_i = 2 \times ( \shapR(i) \totalval^R - \price_i )$.
\fi
In the market where each party is equivalent, that is $\forall k, \price_k = \price$,
replication does not help.
\ifFull
Using the symmetry property of Shapley value:
\begin{eqnarray*}
 t^R = 2 \times ( \shapR(i) \totalval^R - \price ) =  2 \times (\frac{1}{M+1}\sum^{M+1} \price - \price ) =  2 \times (\price - \price) = 0 
\end{eqnarray*}
\else
Using the symmetry property of Shapley value we can show that $t^R = 0$.

\fi
Let us consider the case when the contributions of the parties
are different.
Recall that:
$t=\shap(i)\totalval - \price_i$
and 
$t^R = 2 \times ( \shapR(i)(\totalval + \price_i) - \price_i )$.
Replication is useful to party $i$ only if $t < t^R$.
If this is the case, then:
\ifFull
\begin{eqnarray*}
t &<& t^R\nonumber\\
\shap(i)\totalval - \price_i &<& 2(\shapR(i)(\totalval+\price_i) - \price_i)\nonumber\\
%\shap(i)\totalval - \price_i &<& 2\shapR(i)(\totalval+\price_i) - 2\price_i\nonumber\\
\shap(i)\totalval + \price_i&<& 2\shapR(i)(\totalval+\price_i) \nonumber\\
\end{eqnarray*}
\else
$\shap(i)\totalval + \price_i< 2\shapR(i)(\totalval+\price_i)$ should hold.
\fi
\begin{condition}
%[Replication robustness]
The market in \S\ref{sec:instantiations} is robust to replication if
$\shapR(i) \le \frac{\shap(i)\totalval + \price_i}{2(\totalval+\price_i)}
$.
\label{cond:cmpshap}
\end{condition}

We show that our single-task market instantiation is robust to replication
as long as the gain function $\gain$ has the properties outlined in \S\ref{sec:mlbg}.
Our proof proceeds as follows.
We first derive the payoff that $i$ needs
to make in the replicated market in order
to break even.
We then determine 
the marginal contribution
of party $i$ for new coalitions
that are created due to the addition of its replica~$i'$ (see Lemma~\ref{lemma:margcontrib}
in Appendix).
We then devise
the relationship between
the Shapley values of party~$i$ before
and after replication.
Finally we show that the Shapley value of $i$ in the new replicated market
satisfies the condition of robustness to replication, yielding the following theorem (proof details are provided in the appendix):
\begin{restatable*}{theorem}{thmrobustsingle}
\label{thm:robustsingle}
%\begin{lemma}
The single-task market instantiation in \S\ref{sec:singletask} is robust to replication
as per Condition~\ref{cond:cmpshap}.
%\end{lemma}
\end{restatable*}
\fi
%%%%%%%%%%%%%%%%%%%%%%%%%%%%%%

\ifFull
\section{Multiple Validation tasks}  
\label{sec:multitask}
\else
\subsection{Multiple Validation tasks}  
\label{sec:multitask}
\fi

The market setup is similar to the single task market
where there is a task from each party, $\valdata_i$.
Similarly
when entering the market the party submits $\trdata_i$
and fee $\fee_i$.
After the market completes,
$i$ obtains $\model^i_{\parties}$
and payout $b_i \ge 0$.
In \S\ref{sec:modeltrain}
we explain how to train an individual $\model^i_{\parties}$ from
parties' datasets
and use the rest of the section to describe
how we instantiate the market
and compute the payoffs.
% and show that the market is robust to replication.
%

\vspace{-10pt}
\paragraph{Characteristic Function.}  
Let $\partiesD \subseteq \parties$ be a set of indices of the distinct validation tasks,
i.e., for every $i \in \parties$, there is $j \in \partiesD$ s.t.~$\valdata_i = \valdata_j$.
Then the characteristic function for the multi-task market is defined as:
\begin{eqnarray}
\vmult(S,\parties) &=&  \underbrace{\sum_{i\in \partiesD} \gain\big(\valdata_i; \model^i_{S})}_{\text{value of models for distinct tasks}} \nonumber \\
&+& \sum_{i \in S} \big[\underbrace{\gain\big(\valdata_i; \model^i_{S}) - \gain\big(\valdata_i; \model^i_{i})}_{\text{value for party $i$}} \big] \label{eq:charM}
\end{eqnarray}
When clear from the context, we omit the second argument of $\vmult(\cdot, \cdot)$.
We observe that this is a natural extension from a single validation task:
the goal of the market is to achieve better performance on all (distinct) tasks of the market as well as individually for
each party.
As we will see it also helps with lowering the impact of replicated parties on
the value of the market: an addition of a replicated party will affect only the second part of the value function
while a party with new data has a chance to contribute to both parts of the characteristic function.

\ifFull
\subsection{Payment Division}
\else
\paragraph{Payment division and properties.}
\fi
The payment division is identical to the one for a single task market in \S\ref{sec:singlepaydiv}
except that $w$ is used as the characteristic function when computing the Shapley value.

\ifShort
Since the payment function is the same as in the single-task market,
the condition on robustness is also the same
and marginal contribution of $i$ to coalitions that
are created due to its replication is limited in the same
(see Lemma~\ref{lemma:margincontribM} in Appendix).
\fi

\ifFull

\subsection{Properties}
\fi
Since multi-task market payout is defined
in the same manner as the one for single task,
the following properties also hold: balance, zero element
and symmetry.
The main difference is the computation
of the characteristic function~$w$.
The multiple-task market is robust to replication as we
prove in the appendix.
\ifFull

Since the payment function is the same,
the condition on robustness is the same as Condition~\ref{cond:cmpshap}.
The proof for robustness to replication will appear in the full version.
We note that marginal contribution of $i$ to coalitions that
are created due to its replication $i'$ are limited in the same
manner as in the single-task market.
\fi
%In particular,
%we obtain an equivalent of Lemma~\ref{lemma:margcontrib}.
%Proof is in the Appendix.
%
\ifFull
\begin{restatable*}{lemma}{margcontribM}
%\begin{lemma}
\label{lemma:margcontribM}
Let $i$ and $i'$ be replicas and let $i' \in S$ then
$\vmult(S \cup \{i\})- \vmult(S) \le \price_i$ for any set $S$.
%\end{lemma}
\end{restatable*}
\fi

\begin{restatable*}{theorem}{thmrobustmult}
\label{thm:robustmulti}
%\begin{lemma}
The multiple-task market instantiation in \S\ref{sec:multitask} is robust to replication
as per Condition~\ref{cond:cmpshap}.
\end{restatable*}

%%%%%%%%%%%%%%%%%%%%%%
%% TRAINING      %%
%%%%%%%%%%%%%%%%%%%%%%

%!TEX root = main.tex

\begin{figure*}[h]
    %\centering
    \begin{subfigure}[b]{.49\textwidth}
        \centering
        \includegraphics[width=0.49\textwidth]{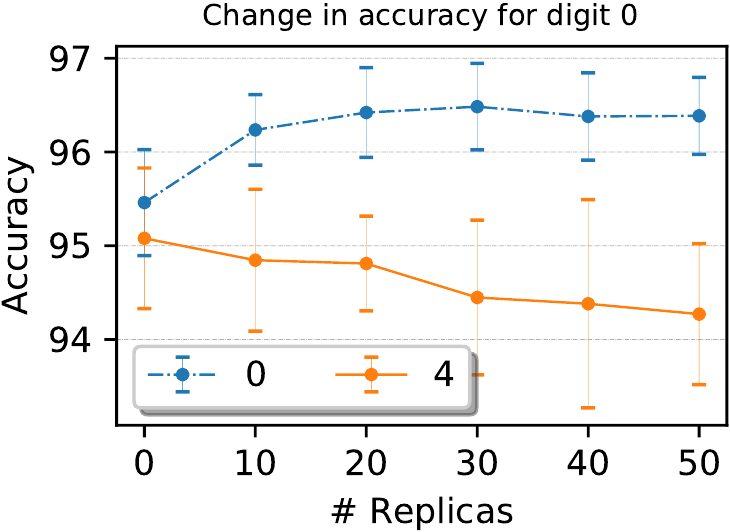}
        \includegraphics[width=0.49\textwidth]{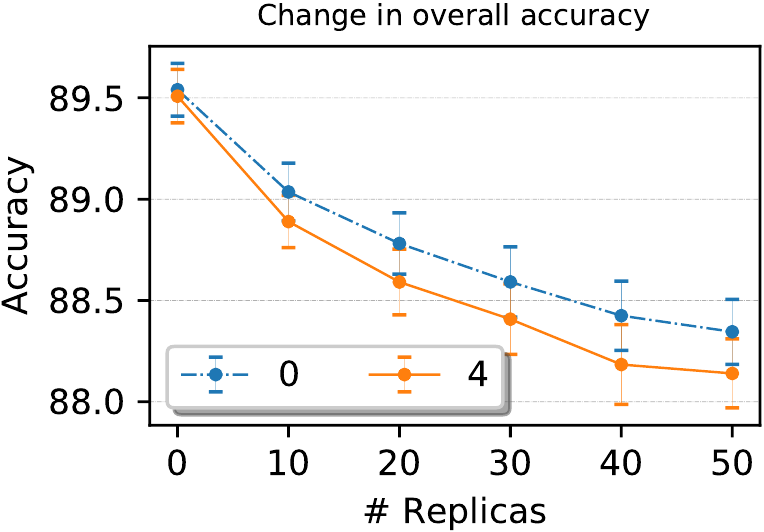}
       % \caption{ }
        \label{fig:mnist_lr_0}
        \caption{Logistic regression}
    \end{subfigure}%
    \hfill
    \begin{subfigure}[b]{.49\textwidth}
        \centering
        \includegraphics[width=0.49\textwidth]{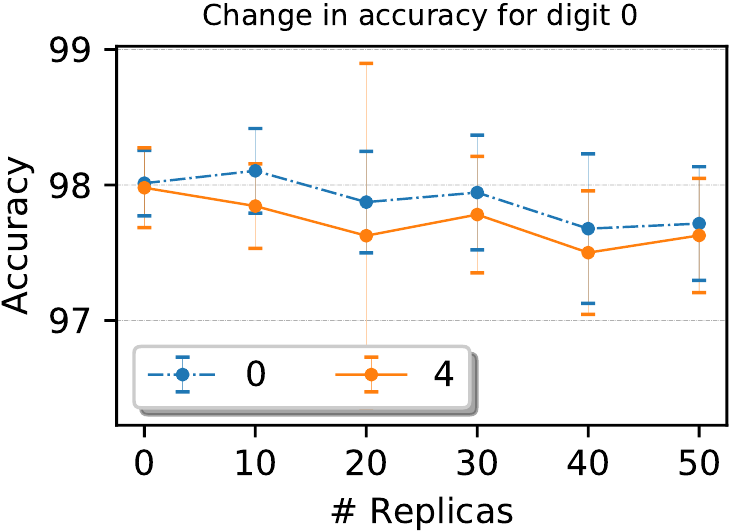}
        \includegraphics[width=0.49\textwidth]{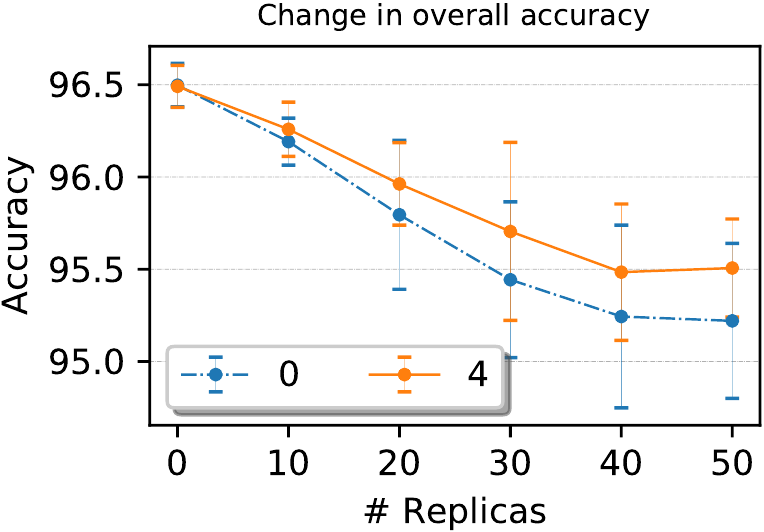}
      % \caption{ }
        \label{fig:mnist_nn_0}
        \caption{DNN}
    \end{subfigure}%
\hfill
\caption{Accuracy for MNIST dataset trained using logistic regression (two left plots) and a DNN (two right plots)
with replication of digit~$0$. Replication of other digits showed similar results.}
\label{fig:Mnist}
\end{figure*}

\section{Training over Multi-Party Data}
\label{sec:modeltrain}
In this section, we describe
(1) how to align the training data provided by all parties for training models
for different tasks, i.e., select the relevant training data;
and (2) how to train a model for party $i$
such that it performs well only on its corresponding task and
not other tasks in the market.

\ifFull
\subsection{Training Data Selection}
\else
\paragraph{Training Data Selection.}
\fi
\label{sec:data-selection}

With multi-party data we need to select the right training data for training each parties' model.
To this end, we also make use of Shapley values. 
Note that this step happens prior to the computation of the payoffs which again involves the computation of Shapley values but for a different characteristic function.
For training data selection for party $i$ we follow two steps:
\ifFull
\begin{enumerate}
  \item Compute the Shapley values $\shap(u_i,i)$ for $u_i(S) = \mathcal{G}(\valdata_i, \model(\cup_{j \in S} \trdata_j))$. 
  \item Determine the relevant data for the task of party $i$ as $\mathcal{D}_i = \{j \in \parties \mid \shap(u_i,j) > \tau\}$, where $\tau$ is a threshold than can be used to control the amount of used training data.
\end{enumerate}
\else
(1) Compute the Shapley values $\shap(u_i,i)$ for $u_i(S) = \mathcal{G}(\valdata_i, \model(\cup_{j \in S} \trdata_j))$.
(2) Determine the relevant data for the task of party $i$ as $\mathcal{D}_i = \{j \in \parties \mid \shap(u_i,j) \geq \tau\}$, where $\tau$ is a threshold that can be used to control the amount of used training data.
\fi
Note that the above steps can be trivially applied on a sample-level instead of a dataset-level.
Computing sample-level Shapley value would typically result in better model performance at increased computational cost for deciding on the training data to use.
In the case of a single common validation set, $\mathcal{D}_i = \mathcal{D}_j$, $\forall i \neq j$.

The characteristic functions used for computing payoffs in \S\ref{sec:multitask} are computed using only relevant training data. 
In particular, when computing Shapley values for determining each parties' payoff, we use the shorthand $\model_{S}^i$ for $\model_{S \cap \mathcal{D}_i}$.
\ifFull
We omit the superscript when there is only one validation task in the market place.
\else
Superscript is omitted for single validation task.
\fi

\ifFull
\subsection{Customized Model Training}
\else
\paragraph{Customized Model Training.}
\fi
\label{sec:adversarial-training}

To minimize usefulness of the model provided to party $i$ after interacting with the market for other parties' validation tasks, we adopt the following strategy for training the model for party $i$:
Party $i$ receives a model trained on the data of parties $\mathcal{D}_i$.
Additionally the model that party $i$ receives is optimized for maximizing the loss wrt all $\valdata_k$ for $k \neq i$, i.e.
\begin{align}
  \min_{\theta} \qquad &\sum_{j \in \parties \setminus \{i\}} \mathcal{G}(\valdata_i, \model(\cat_{k \in \parties} \trdata_k)) \\
       \textnormal{s.t.} \qquad & \mathcal{G}(\valdata_i, \model(\cat_{k \in \parties} \trdata_k)) \geq \gain^* - \epsilon, \nonumber
\end{align}
where $\gain^*= \mathcal{G}(\valdata_i, \model^i_\parties)$ is the best performance achievable on the validation data of party $i$ by standard model training, $\epsilon \geq 0$ is some constant, and $\theta$ are parameters of the models, e.g., weights of a neural network.
To avoid overfitting to the validation data by using it during model training (we still need to resort to the validation data for pricing) and have a straightforwardly approachable optimization problem, we consider the proxy
$\max_\theta \  \gain(\cup_{k \in \mathcal{D}_i} \trdata_k, \model^i_{\parties}) - \lambda \sum_{j \neq i} \gain(\valdata_j, \model^i_{\parties}),$ 
%\end{align}
where $\lambda > 0$ is a hyperparameter.
This approach is evaluated in our experiments.% in \S\ref{sec:exp-customized}.

%%%%%%%%%%%%%%%%%%%%%%
%% EXPERIMENTS      %%
%%%%%%%%%%%%%%%%%%%%%%

%!TEX root = main.tex

\section{Experiments}
\label{sec:experiments}

We implemented a prototype of our marketplace
and evaluated it on the task of classifying handwritten digits from the MNIST dataset~\cite{mnist}.
Here we validate
assumptions made in the paper
and evaluate the effectiveness of our marketplace setup.
The goals of our evaluation are:
\begin{inparaenum}[(i)]
	\item Understand the effect of replicated training data on the accuracy on validation data;
    \item Measure the (importance) value of training data towards a single validation task;
       \item Calculate the payoffs with our payment division function for training data with and w/o replication
    and compare them to those of the naive payment division; 
     \item Evaluate the effectiveness of an output model customized for a given validation task.

\end{inparaenum}

\subsection{Effect of Data Replication on ML models}
\label{sec:exp-replication}
We aim to understand how data replication
affects model accuracy and validate the assumption we made
in \S\ref{sec:mlbg}.
To this end, we consider a collaboration among $10$ parties
for the validation task of $10$K images with all $10$ digits. 
We measure the accuracy on this validation task on models that are trained  using the data from all parties.
We train a logistic regression model and a single hidden layer (500 neurons) DNN model for 100 epochs with a learning rate of $10^{-2}$, Adam optimizer and a value of $10^{-4}$ for $\ell$2 regularization.

In the initial setup, each party contributes $1000$ images corresponding to a single digit,
which are randomly sampled from the training data for this digit.
To understand the effect of replication, we compare
the accuracy of the model trained in the initial setup with the following replication configurations.
The party with digit $0$ or $4$, respectively, replicates and creates new parties with the same $1000$ images as their training data.
We vary the number of replicas from $0$ to $50$.
Hence, there will be $50{,}000$ samples corresponding to the digit $0$ in the combined training dataset for $50$ replicas.
All observations are similar for the replication of other digits.
In addition to the overall accuracy of the validation task, we looked at the accuracy of the digits $0$ and $4$ present in the validation dataset.

\paragraph{Results.}
\label{sec:exp-replication-app}

Figure~\ref{fig:Mnist} shows results for a logistic regression and a DNN model averaged over 50 runs (random draws of training and validation data, and of the initial network weights).
The dotted and the solid lines denote the accuracy over the number of replicas.
For logistic regression, the accuracy of classifying digit $0$ increases initially with replication of digit $0$ and then remains constant.
This confirms that replication does not necessarily benefit the accuracy of the replicated digit on a model that is trained on sufficient samples.
On the other hand, the accuracy of classifying the digit~$0$ decreases slightly with replication of~$4$.
This result shows that replication may hurt the accuracy of other validation tasks in the market.
Finally, we observe that the accuracy of the overall validation task of $10$K images reduces only slightly with the replication of both digits.
The results for DNNs show similar trends.
Thus, in summary, the assumptions needed for our theoretical results hold approximately in practice.

\begin{figure*}[t]
  \centering
  \begin{subfigure}[b]{.33\textwidth}
    \includegraphics[width=1.\linewidth]{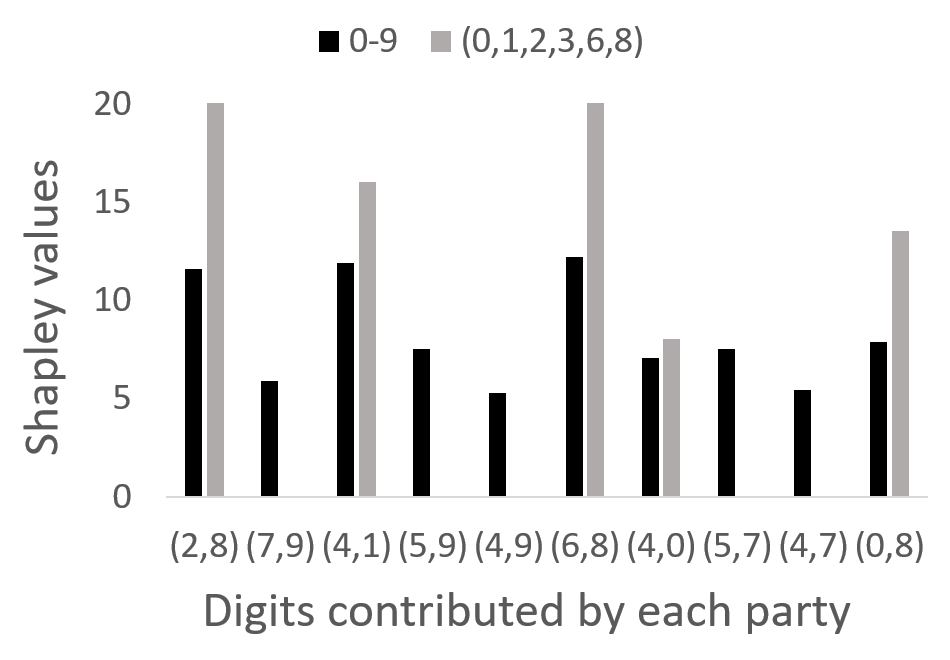}
    \caption{}
    \label{fig:mnist_shapley}
  \end{subfigure}
  \begin{subfigure}[b]{.33\textwidth}
	  \includegraphics[width=1.\linewidth]{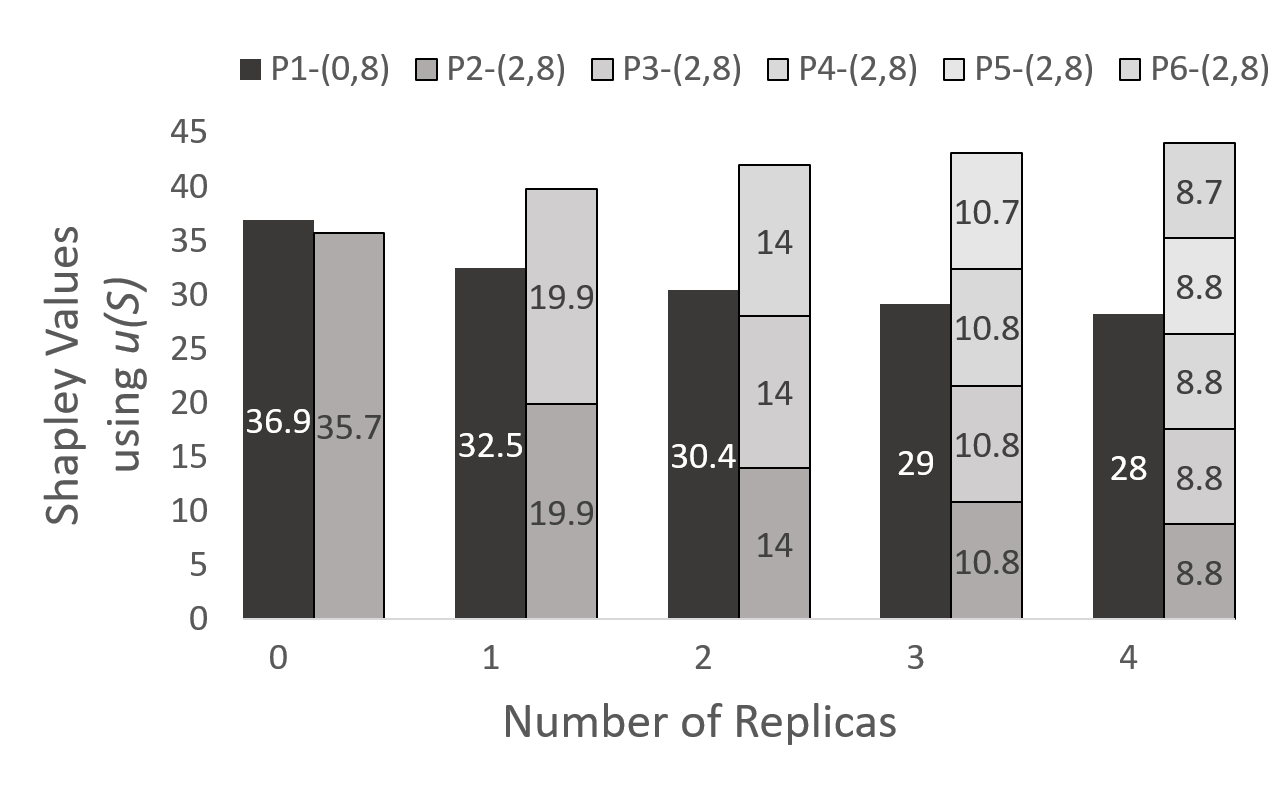}
    \caption{}
    \label{fig:mnist_shapley_replication}
  \end{subfigure}
\hfill
  \begin{subfigure}[b]{.31\textwidth}
    \includegraphics[width=1.\linewidth]{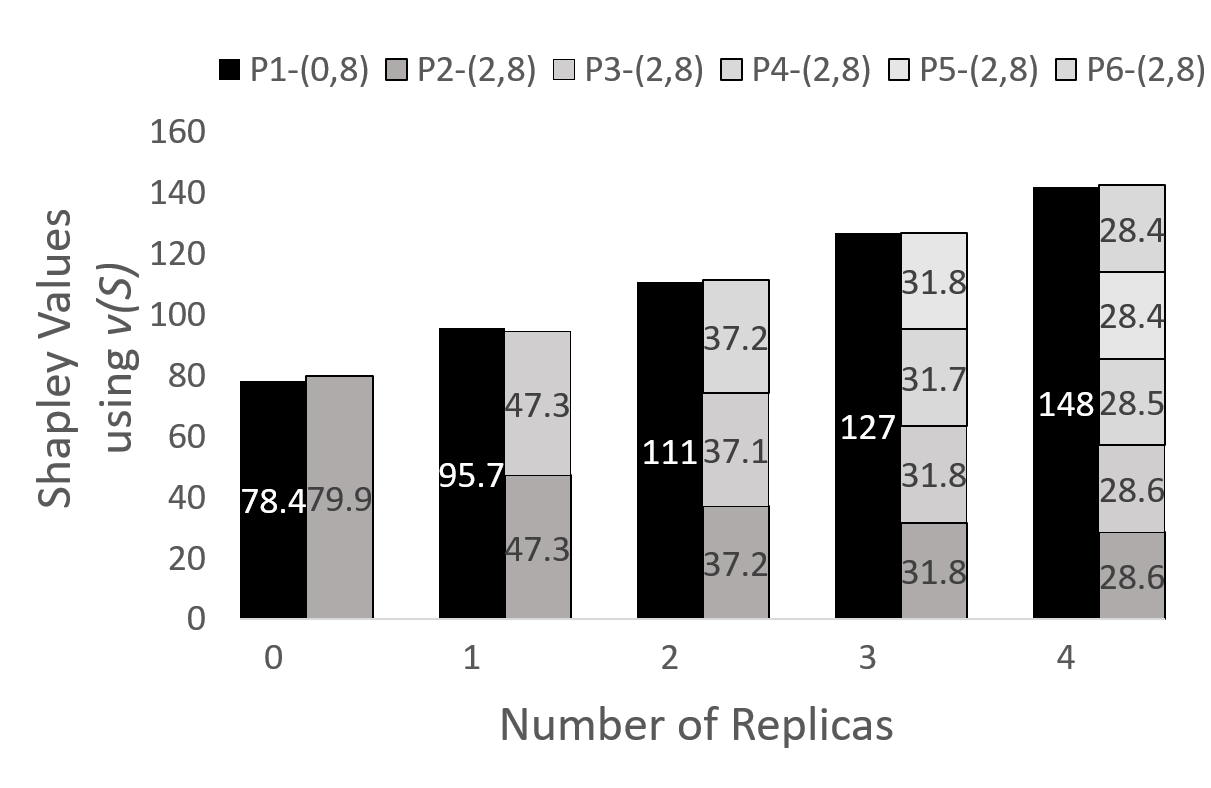}
    \caption{}
    \label{fig:payoffs}
  \end{subfigure}
  \vspace{-2mm}
  \caption{(\subref{fig:mnist_shapley}) Shapley values for parties contributing training data for validation tasks with all ten digits and only digits $(0,1,2,3,6,8)$.
           Figures (\subref{fig:mnist_shapley_replication}) and (\subref{fig:payoffs}): Shapley values 
            on validation dataset of digits (0,2,8)  with replicating parties for characteristic function $u$ (accuracy) in (\subref{fig:mnist_shapley_replication}) and our proposed characteristic function $v$ (\S\ref{sec:charfuncsingle}) in
            (\subref{fig:payoffs}). }
  \label{fig:shapley_values}
\end{figure*}

\subsection{Customized Model Training}
\label{sec:exp-trdata-selection}

An important component of our marketplace is customized model training, cf.\ \S\ref{sec:modeltrain}, which consists of two steps: data selection and model training.
We empirically investigate both steps here.

\textbf{Training data selection.}
To understand the usefulness of training data with respect to a validation task, we consider a setting where each party holds $2000$ training samples for two randomly selected labels.
As validation tasks, we consider classification of all ten digits and classifying digits $(0,1,2,3,6,8)$. 
Figure~\ref{fig:mnist_shapley} shows the Shapley values using characteristic function $u$ for both validation tasks.
For all the digits in the validation data (label \emph{0-9} in the figure), we observe that the parties with unique labels such as 1, 2 and 6 have higher Shapley values indicating higher utility
of their training data. For the validation task on digits $(0,1,2,3,6,8)$ we observe that the parties which do not contribute any of these digits have zero Shapley values.
This confirms that Shapley values are well suited for selecting training data for task-specific model training.

\textbf{Model training.}
To evaluate the customized model training we considered a setting with ten parties in which each party holds data for two different digits s.t.\ party $j$ has data for labels $(j,(j+1) \mod 10)$ for $j \in \{0, 1, \ldots, 9\}$.
For party $j$ the training data consists of all training samples from the MNIST dataset with either of its two labels.
The same holds for its validation data with respect to the validation data from MNIST.
In Figure~\ref{fig:model-usefulness} we see results for applying the customized model training approach
when training a logistic regression model for the validation task of each party.
We observe that although the combined training points consist of all the digits in MNIST, the model trained with our approach is useful only for the specific validation task and has mediocre performance on (other) validation tasks with overlapping labels. 
That is, it is not useful to classify digits from the remaining labels outside the specific validation task.

%%%%%%%%%%%%%
%%%%%%%%%%%%%

\subsection{Payoffs in Single Task Marketplace}
\label{sec:exp-payoffs}
We now empirically validate that our proposed characteristic function $v$ in \S\ref{sec:charfuncsingle} for computing payoffs is indeed robust to replication.
We also compare it to the case where Shapley values are computed for characteristic function $u$ (model accuracy), cf.~\S\ref{sec:modeltrain}, which is not robust to replication.
Here, we consider two parties: Honest party $P_1$ with digits (0,8), and replicating party $P_2$ with digits (2,8).
$P_2$ creates replicas $P_3$ to $P_6$ with the same dataset.
The single validation task is classification of digits 0,2,8.
The parties combine their data and train a logistic regression model.

Figure~\ref{fig:shapley_values} shows Shapley values of our results as number of replicas increases.
We observe that the combined Shapley value of the replicating parties increases with the number of replicas while that of the honest party $P_1$ decreases for~$u$ (Figure~\ref{fig:mnist_shapley_replication}) when using
Shapley values with accuracy as the characteristic function. 
In contrast, Figure~\ref{fig:payoffs} shows that when Shapley values are instantiated with
our new characteristic function, Shapley values of $P_1$ and the combined values for the replicating parties are equal.
This verifies that replication does not benefit the replicating party when using
our single-task collaborative market instantiation.

\begin{figure}[t]
	\centering
    \includegraphics[width=1.\linewidth]{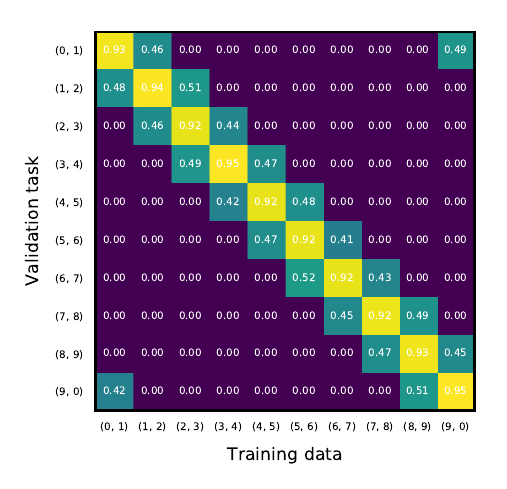}
    \caption{Classification accuracy of the customized models for different validation tasks.}
    \label{fig:model-usefulness}
\end{figure}

%%%%%%%%%%%%%%%%%%%%%%
%% RELATED WORK     %%
%%%%%%%%%%%%%%%%%%%%%%

%!TEX root = main.tex

\section{Related Work}
Multiple lines of work are related to our proposed collaborative machine learning markets framework.
We organized it into work on using \emph{Shapley values for valuating data} for model training and \emph{machine learning marketplaces}. 
Also general work on game theory and mechanism design is related to our work but not discussed here in detail, cf.~\cite{nisan2007algorithmic} for an overview.

{\bfseries Shapley values for valuating data.}
Prior work has proposed the use of Shapley values for characteristic function $u$ to valuate data in machine learning settings.
Most of these works focus on the problem of efficiently approximating Shapley values.
Closest to ours,~\citeauthor{agarwal2019marketplace}~\shortcite{agarwal2019marketplace} uses Shapley values to compute payoffs in non-collaborative marketplace. 
Before that, Datta et al.~\shortcite{datta2016algorithmic} proposed the use of Shapley values to quantify feature importance for classification problems, and proposed sampling based approximations of Shapley values.
Ghorbani et al.~\shortcite{ghorbani2019} show that Shapley values are good metric to quantify the usefulness of data for machine learning tasks.
They show that valuating data for model training based on Shapley value is better than using leave-one-out cross-validation, and  propose sampling and gradient based approaches for approximating Shapley values.
Also Jia et al.~\shortcite{jia2019towards} propose several sampling based approaches and the use of influence functions~\cite{koh2017understanding} to reduce the computational cost of computing Shapley values.

{\bfseries Machine learning marketplaces.}
The closest to our paper is work by Agarwal et al.~\shortcite{agarwal2019marketplace} who study an algorithmic approach for data marketplaces.
They propose a general marketplace setup which makes use of Shapley values for computing payoffs.
The marketplace sets prices for data in an online fashion and matches sellers to buyers.
They highlight the problem of data replication in combination with Shapley values and propose to overcome it heuristically by downweighing Shapley values according to data similarity.
Critically, their approach also reduces payoffs for honest, i.e., non-replicating parties.
In contrast, we achieve robustness-to-replication naturally through the multi-party setting in combination with novel characteristic functions.
Furthermore, in our marketplace no data is given to any party, avoiding further reselling of the data while in~\cite{agarwal2019marketplace} this is possible.
Other related work tries to price machine learning models instead of data directly~\cite{chen2018model}. In that work, a broker forms the interface between data sellers and model buyers.
Their main focus is to ensure affordability of models for all buyers by adjusting the model's performance through noise-injection and prevention of arbitrage.
\citeauthor{DBLP:conf/nips/AbernethyF11}~\shortcite{DBLP:conf/nips/AbernethyF11} study incentive mechanisms
in the setting where solving a task (e.g., ML prediction) is done in a collaborative decentralized manner with parties contributing towards solving the task with their expertise (e.g., improvements to a classifier), as opposed to contributing data. 
Marketplaces that treat data similar to classical goods are available at~\cite{bdex,qlik}.

%%%%%%%%%%%%%%%%%%%%%%
%% CONCLUSIONS      %%
%%%%%%%%%%%%%%%%%%%%%%

%!TEX root = main.tex

\section{Future Work}

We consider the following extensions of our work:
\emph {Performance:} Improve the time to select training data using optimizations for evaluation of Shapley values such as sampling or influence functions~\cite{koh2017understanding,jia2019towards,ghorbani2019}.
\emph{Data Privacy:}  Use differential privacy techniques to strengthen privacy of training data in the customized models,
as a result mitigating model inversion or membership inference attacks on the output model~\cite{shokri2017membership,fredrikson2015model}.
\emph{Iterative Market:} Consider marketplace properties when the parties interact with the market over multiple rounds. Iterative markets would allow parties to learn additional information about the market such as the type of validation tasks and training data available, which might give advantage to adversarial parties.

%%%%%%%%%%%%%%%%%%

\bibliographystyle{aaai}
\bibliography{refs}

\begin{thebibliography}{}

\bibitem[\protect\citeauthoryear{Abernethy and
  Frongillo}{2011}]{DBLP:conf/nips/AbernethyF11}
Abernethy, J.~D., and Frongillo, R.~M.
\newblock 2011.
\newblock A collaborative mechanism for crowdsourcing prediction problems.
\newblock In {\em Advances in Neural Information Processing Systems (NeurIPS)}.

\bibitem[\protect\citeauthoryear{Agarwal, Dahleh, and
  Sarkar}{2019}]{agarwal2019marketplace}
Agarwal, A.; Dahleh, M.; and Sarkar, T.
\newblock 2019.
\newblock A marketplace for data: An algorithmic solution.
\newblock In {\em Proceedings of the 2019 ACM Conference on Economics and
  Computation},  701--726.
\newblock ACM.

\bibitem[\protect\citeauthoryear{{Big Data Exchange}}{}]{bdex}
{Big Data Exchange}.
\newblock {Big Data Exchange}.
\newblock \url{htt://www.bigdataexchange.com}.

\bibitem[\protect\citeauthoryear{Chen, Koutris, and
  Kumar}{2018}]{chen2018model}
Chen, L.; Koutris, P.; and Kumar, A.
\newblock 2018.
\newblock Model-based pricing for machine learning in a data marketplace.
\newblock {\em arXiv preprint arXiv:1805.11450}.

\bibitem[\protect\citeauthoryear{Datta, Sen, and
  Zick}{2016}]{datta2016algorithmic}
Datta, A.; Sen, S.; and Zick, Y.
\newblock 2016.
\newblock Algorithmic transparency via quantitative input influence: Theory and
  experiments with learning systems.
\newblock In {\em 2016 IEEE symposium on security and privacy (SP)},  598--617.
\newblock IEEE.

\bibitem[\protect\citeauthoryear{Fredrikson, Jha, and
  Ristenpart}{2015}]{fredrikson2015model}
Fredrikson, M.; Jha, S.; and Ristenpart, T.
\newblock 2015.
\newblock Model inversion attacks that exploit confidence information and basic
  countermeasures.
\newblock In {\em Proceedings of the 22nd ACM SIGSAC Conference on Computer and
  Communications Security},  1322--1333.
\newblock ACM.

\bibitem[\protect\citeauthoryear{Ghorbani and Zou}{2019}]{ghorbani2019}
Ghorbani, A., and Zou, J.
\newblock 2019.
\newblock Data {Shapley}: Equitable valuation of data for machine learning.
\newblock {\em arXiv preprint arXiv:1904.02868}.

\bibitem[\protect\citeauthoryear{Hoekstra \bgroup et al\mbox.\egroup
  }{2013}]{sgx}
Hoekstra, M.; Lal, R.; Pappachan, P.; Rozas, C.; Phegade, V.; and del Cuvillo,
  J.
\newblock 2013.
\newblock Using innovative instructions to create trustworthy software
  solutions.
\newblock In {\em Workshop on Hardware and Architectural Support for Security
  and Privacy (HASP)}.

\bibitem[\protect\citeauthoryear{Jia \bgroup et al\mbox.\egroup
  }{2019}]{jia2019towards}
Jia, R.; Dao, D.; Wang, B.; Hubis, F.~A.; Hynes, N.; Gurel, N.~M.; Li, B.;
  Zhang, C.; Song, D.; and Spanos, C.
\newblock 2019.
\newblock Towards efficient data valuation based on the {Shapley} value.
\newblock {\em arXiv preprint arXiv:1902.10275}.

\bibitem[\protect\citeauthoryear{Koh and Liang}{2017}]{koh2017understanding}
Koh, P.~W., and Liang, P.
\newblock 2017.
\newblock Understanding black-box predictions via influence functions.
\newblock In {\em International Conference on Machine Learning (ICML)},
  1885--1894.

\bibitem[\protect\citeauthoryear{LeCun \bgroup et al\mbox.\egroup
  }{1998}]{mnist}
LeCun, Y.; Bottou, L.; Bengio, Y.; Haffner, P.; et~al.
\newblock 1998.
\newblock Gradient-based learning applied to document recognition.
\newblock {\em Proceedings of the IEEE} 86(11):2278--2324.

\bibitem[\protect\citeauthoryear{Nisan \bgroup et al\mbox.\egroup
  }{2007}]{nisan2007algorithmic}
Nisan, N.; Roughgarden, T.; Tardos, E.; and Vazirani, V.~V.
\newblock 2007.
\newblock {\em Algorithmic game theory}.
\newblock Cambridge university press.

\bibitem[\protect\citeauthoryear{Ohrimenko \bgroup et al\mbox.\egroup
  }{2016}]{Ohrimenko2016}
Ohrimenko, O.; Schuster, F.; Fournet, C.; Mehta, A.; Nowozin, S.; Vaswani, K.;
  and Costa, M.
\newblock 2016.
\newblock Oblivious multi-party machine learning on trusted processors.
\newblock In {\em USENIX Security Symposium}.

\bibitem[\protect\citeauthoryear{{QLik Data Market}}{}]{qlik}
{QLik Data Market}.
\newblock {QLik Data Market}.
\newblock \url{http://www.qlik.com/us/products/qlik-data-market}.

\bibitem[\protect\citeauthoryear{Shapley}{1953}]{shapley1953value}
Shapley, L.~S.
\newblock 1953.
\newblock A value for n-person games.
\newblock {\em Contributions to the Theory of Games} 2(28):307--317.

\bibitem[\protect\citeauthoryear{Shokri \bgroup et al\mbox.\egroup
  }{2017}]{shokri2017membership}
Shokri, R.; Stronati, M.; Song, C.; and Shmatikov, V.
\newblock 2017.
\newblock Membership inference attacks against machine learning models.
\newblock In {\em Security and Privacy (SP), 2017 IEEE Symposium on},  3--18.
\newblock IEEE.

\end{thebibliography}

  %%%%%%%%%%%%%%%%%%%%%%%%%%%%%%
  %% Appendix %%%%%%%%%%%%%%%%%%
  %%%%%%%%%%%%%%%%%%%%%%%%%%%%%%
  \clearpage
  \onecolumn
  \appendix

%!TEX root = main.tex

\section{Proof Details}

We show that our market instantiations are robust to replication.
Recall that a market is robust if
there is no party that gains a higher payoff in the market with its replicas
compared to its gain in the original market (Condition~\ref{cond:cmpshap}).
Let $i$ be the party that replicates itself.
Our proof proceeds as follows.
In Lemma~\ref{lemma:shapR}, we use
the Shapley value of party $i$ before replication ($\shap_i$)
to bound
the Shapley value of party~$i$ after replication ($\shapR_i$)
and show that the bound holds under several assumptions on the characteristic function~$v$.
We then show that, if $\shapR$ is bounded as it is in Lemma~\ref{lemma:shapR},
then the market is robust to replication (Lemma~\ref{lemma:robust}).
Finally, we prove that the characteristic functions for the single validation task market (Equation~\ref{eq:charS})
and multiple validations task market (Equation~\ref{eq:charM}) and corresponding Shapley values satisfy
conditions of  Lemma~\ref{lemma:robust} and hence are both robust-to-replication
(Theorems~\ref{thm:robustsingle} and~\ref{thm:robustmulti}).

In this section we outline lemmas that hold for both market instantiations and
in the following sections show that assumptions of the lemma statements indeed hold
for the single- and multiple-validation-task markets.

\begin{lemma}
%\begin{lemma}
\label{lemma:shapR}
In a collaborative ML market with parties in set $\parties$,
let $\price_i$ be the gain that party $i$ obtains for its own task
by participating in the market.
Let $v(S,\parties)$ be a characteristic function of the value of parties in $S \subseteq \parties$
with the following properties:
(1) $v(S,\parties) = v(S,\partiesR)$, $\forall S\subseteq \parties$,
where $\partiesR$ are the same parties as in the original
market along with a replica of party $i\in\parties$,~$i'$. (2) 
$v(S \cup\{i\},\partiesR) - v(S,\partiesR) \le a_i$ where $i' \in S$ and $i'$ is a replica of~$i$.
(3) $v$ is supermodular.
Let $\shap$ and $\shapR$ be the Shapley values
of the original market and the market with~$i$'s replica, then:
%Then, if $\gain$ used in the characteristic function $v$ is monotonic and supermodular,
\begin{eqnarray*}
\shapR(i)  \le \frac{\shap(i) v(\parties,\parties)}{2v(\partiesR)} + \frac{\price_i }{2v(\partiesR, \partiesR)} 
= \frac{\shap(i) v(\parties, \parties) + \price_i }{2v(\partiesR, \partiesR)} 
\end{eqnarray*}
%\end{lemma}
\end{lemma}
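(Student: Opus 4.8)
The plan is to bound the effect of the added replica on the Shapley value of $i$ coalition by coalition, and to normalise only at the very end. Write $\Phi(i)$ for the unnormalised Shapley value of Eq.~\eqref{eq:shapley} of party $i$ in the original $M$-party market ($M=|\parties|$) and $\Phi^R(i)$ for the unnormalised Shapley value of $i$ in the $(M+1)$-party market $\partiesR=\parties\cup\{i'\}$, so that the quantities in the statement are $\shap(i)=\Phi(i)/v(\parties,\parties)$ and $\shapR(i)=\Phi^R(i)/v(\partiesR,\partiesR)$. Dividing the claimed inequality by $v(\partiesR,\partiesR)>0$, it is equivalent to
\[
  \Phi^R(i)\;\le\;\tfrac12\bigl(\Phi(i)+\price_i\bigr),
\]
so the $v$-dependent denominators never actually have to be related to each other. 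I would then expand $\Phi^R(i)$ over coalitions $S\subseteq\partiesR\setminus\{i\}$ and split according to whether $S$ contains the replica $i'$. When $i'\notin S$ we have $S,S\cup\{i\}\subseteq\parties$, so by assumption~(1) the marginal $v(S\cup\{i\})-v(S)$ equals the original marginal $m_S:=v(S\cup\{i\},\parties)-v(S,\parties)$; when $S=T\cup\{i'\}$ with $T\subseteq\parties\setminus\{i\}$, assumption~(2) gives $v(S\cup\{i\})-v(S)\le\price_i$. Tracking the Shapley weights of the $(M+1)$-party game, this gives $\Phi^R(i)=A+B$ with
\[
  A=\sum_{S\subseteq\parties\setminus\{i\}}\frac{|S|!\,(M-|S|)!}{(M+1)!}\,m_S,\qquad
  B\le\price_i\sum_{T\subseteq\parties\setminus\{i\}}\frac{(|T|+1)!\,(M-|T|-1)!}{(M+1)!}.
\]

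The piece $B$ is easy: grouping the $T$'s by size $t$, each summand becomes $\binom{M-1}{t}\frac{(t+1)!(M-1-t)!}{(M+1)!}=\frac{(M-1)!(t+1)}{(M+1)!}$, and since $\sum_{t=0}^{M-1}(t+1)=\tfrac{M(M+1)}{2}$ the combinatorial sum equals exactly $\tfrac12$, so $B\le\tfrac12\price_i$. The core is the bound $A\le\tfrac12\Phi(i)$, which is exactly where supermodularity (assumption~(3)) enters. Let $\bar m_k$ be the average of $m_S$ over the $\binom{M-1}{k}$ coalitions $S\subseteq\parties\setminus\{i\}$ with $|S|=k$. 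I would first show $\bar m_k$ is non-decreasing in $k$: by supermodularity $m_S\le m_T$ whenever $S\subset T$, so summing this over all pairs with $|S|=k$, $|T|=k+1$ and counting incidences (each size-$k$ set has $M-1-k$ supersets of size $k+1$; each size-$(k+1)$ set has $k+1$ subsets of size $k$) gives $(M-1-k)\sum_{|S|=k}m_S\le(k+1)\sum_{|T|=k+1}m_T$, which rearranges to $\bar m_k\le\bar m_{k+1}$.

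Using the identity $\binom{M-1}{k}\frac{k!(M-k-1)!}{M!}=\tfrac1M$, a short computation then yields
\[
  A-\tfrac12\Phi(i)\;=\;\frac{1}{2M(M+1)}\sum_{k=0}^{M-1}(M-1-2k)\,\bar m_k.
\]
The coefficients $M-1-2k$ sum to $0$ and satisfy $(M-1-2k)\mapsto-(M-1-2k)$ under the involution $k\mapsto M-1-k$, so pairing index $k$ with $M-1-k$ rewrites the right-hand side as $\frac{1}{2M(M+1)}\sum_{k<(M-1)/2}(M-1-2k)(\bar m_k-\bar m_{M-1-k})$, and every summand is $\le0$ because $M-1-2k>0$ while $\bar m_k\le\bar m_{M-1-k}$ by monotonicity. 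Hence $A\le\tfrac12\Phi(i)$, and $\Phi^R(i)=A+B\le\tfrac12(\Phi(i)+\price_i)$, which is the lemma after dividing by $v(\partiesR,\partiesR)$.

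The main obstacle is the inequality $A\le\tfrac12\Phi(i)$: a term-by-term comparison of the replicated weight $\frac{|S|!(M-|S|)!}{(M+1)!}$ with the original weight $\frac{|S|!(M-|S|-1)!}{M!}$ only produces the factor $\frac{M-|S|}{M+1}\le\frac{M}{M+1}$, which is useless, so one genuinely has to offset the extra weight that small coalitions gain after replication against the fact that, by supermodularity, those are precisely the coalitions carrying the smallest marginal contributions — and this is what the monotonicity of $\bar m_k$ plus the anti-symmetric pairing formalises (without supermodularity, or some replacement for it, the $\tfrac12$ bound is simply false). The remaining work — checking that the concrete characteristic functions $\vsing$ of Eq.~\eqref{eq:charS} and $\vmult$ of Eq.~\eqref{eq:charM} satisfy assumptions~(1)--(3), with~(2) being the content of Lemma~\ref{lemma:margcontrib}, using the idealised properties of $\gain$ from \S\ref{sec:mlbg} (in particular that replicating training data leaves $\model$ unchanged) — is routine and is deferred to the market-specific parts of the appendix.
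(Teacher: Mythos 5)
Your proposal is correct, and its overall skeleton matches the paper's: decompose the replicated-market Shapley value of $i$ into the contribution of coalitions without the replica $i'$ (your $A$, the paper's $\shapRold$) and coalitions containing $i'$ (your $B$, the paper's $\shapRnew$), bound the latter by $\price_i/2$ via the exact combinatorial identity (your sum equalling $\tfrac12$ is the paper's Claim that $\sum_s (n_{s,M+1}-n_{s,M})/n_{s,M+1}=(M-1)/2$), and bound the former by half the original Shapley value using supermodularity together with the antisymmetry of the coefficients $M-1-2|S|$. Where you genuinely diverge is in how supermodularity is deployed for that last step: the paper constructs an explicit injection from coalitions of size $|R|$ into coalitions of size $M-|R|-1$ containing them, proving its existence via Hall's marriage theorem on the bipartite incidence graph, and then applies $v(R\cup\{i\})-v(R)\le v(Q\cup\{i\})-v(Q)$ pair by pair; you instead average the marginals $m_S$ over all coalitions of a fixed size $k$, prove $\bar{m}_k\le\bar{m}_{k+1}$ by double-counting the incidences between sizes $k$ and $k+1$, and then pair the sizes $k$ and $M-1-k$ directly. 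Your route is more elementary and arguably tighter as a write-up: it sidesteps the Hall's-theorem construction (which in the paper is really an appeal to the normalized matching property of the Boolean lattice) and replaces it with a two-line counting identity, while delivering exactly the same inequality $A\le\tfrac12\Phi(i)$. Your up-front reduction to unnormalised values, $\Phi^R(i)\le\tfrac12(\Phi(i)+\price_i)$, also makes transparent that the two normalising denominators $v(\parties,\parties)$ and $v(\partiesR,\partiesR)$ never need to be compared, which the paper leaves implicit.
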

\begin{proof}
Let us express $\shapR(i)$ in terms of $\shap$.
Compare the coalition subsets $S$ between the two values.
All the new subsets in $\shapR$ will contain the replica of $i$
and, by the assumption of the lemma,
for such subsets $v(S\cup i,\partiesR) - v(S,\partiesR) = a_i$.
The other subsets are the same as in $\shap$
except their value towards the overall Shapley value decreases.
Let $\shapR(i) = \shapRold(i) + \shapRnew(i)$ where $\shapRold$ is the value brought from old (coalition)
subsets and $\shapRnew$ from the new ones.

\paragraph{Bounding $\shapRold(i)$:}
Since $v(S,\parties) = v(S,\partiesR)$ for all subsets $S$ where replica $i'$ is not present (i.e., those
subsets used in computing $\shapRold$),
we omit the second argument of $v$ to simplify the notation.

Recall that the (normalized) Shapley values of the subsets that existed before replication for both markets are:
$$\shap(i) \!= \frac{1}{v(\parties)}  \sum_{S \subseteq \parties \setminus \{ i \}} \frac{|S|! (M-|S|-1)!}{M!}  \big( v(S\!\cup\!\{i\}) \!-\! v(S) \big)$$

$$\shapR_{\mathsf{old}}(i) \!=  \frac{1}{v(\partiesR)}\sum_{S \subseteq \parties \setminus \{ i \}} \frac{|S|! (M-|S|)!}{(M+1)!}  \big( v(S\!\cup\!\{i\}) \!-\! v(S) \big)$$
%where $v(\partiesR) = v(\parties) + \price_i$.
We need to show that
\begin{equation}
\shapR_{\mathsf{old}}(i) = \frac{1}{v(\partiesR)}\sum_{S \subseteq \parties \setminus \{ i \}} \frac{|S|! (M-|S|)!}{(M+1)!}  \big( v(S\!\cup\!\{i\}) \!-\! v(S) \big) \le \frac{\shap(i) v(\parties)}{2v(\partiesR)}
~\label{eq:P1}
\end{equation}
Simplifying the inequality and substituting $\shap(v)$:
\begin{eqnarray*}
\sum_{S \subseteq \parties \setminus \{ i \}} \frac{|S|! (M-|S|)!}{(M+1)!}  \big( v(S\!\cup\!\{i\}) \!-\! v(S) \big) &\le& \frac{\shap(i) v(\parties)}{2}\\
\sum_{S \subseteq \parties \setminus \{ i \}} \frac{|S|! (M-|S|)!}{(M+1)!}  \big( v(S\!\cup\!\{i\}) \!-\! v(S) \big) &\le& \frac{1}{v(\parties)}  \sum_{S \subseteq \parties \setminus \{ i \}} \frac{|S|! (M-|S|-1)!}{M!}  \big( v(S\!\cup\!\{i\}) \!-\! v(S) \big)  \frac{v(\parties)}{2} \\
\sum_{S \subseteq \parties \setminus \{ i \}} \frac{|S|! (M-|S|)!}{(M+1)!}  \big( v(S\!\cup\!\{i\}) \!-\! v(S) \big) &\le& \frac{1}{2}   \sum_{S \subseteq \parties \setminus \{ i \}} \frac{|S|! (M-|S|-1)!}{M!}  \big( v(S\!\cup\!\{i\}) \!-\! v(S) \big)\\
\sum_{S \subseteq \parties \setminus \{ i \}} \frac{|S|! (M-|S|)!}{(M+1)}  \big( v(S\!\cup\!\{i\}) \!-\! v(S) \big) &\le& \frac{1}{2}   \sum_{S \subseteq \parties \setminus \{ i \}} {|S|! (M-|S|-1)!}  \big( v(S\!\cup\!\{i\}) \!-\! v(S) \big)\\
\sum_{S \subseteq \parties \setminus \{ i \}} {|S|! (M-|S|+1)!(M-2|S| -1)}  \big( v(S\!\cup\!\{i\}) \!-\! v(S) \big)  &\le& 0
\end{eqnarray*}

Note that coefficients $M-2|S| -1$ are symmetric
for sets $|R| < \lfloor M/2\rfloor $ and $|Q| = M-|R|-1$: $M-2|R| -1 = - (M-2|Q| -1)$.
There are ${M-1 \choose |R|}$ sets of size $|R|$ and ${M-1 \choose M-|R|-1}$ of size $M- |R|-1$.
Note that ${M-1 \choose |R|} = {M-1 \choose M-|R|-1}$. Hence, the number
of sets is the same and we are left to define a one-to-one mapping,
s.t., for every set $R$ there is a set $Q$, $R \subset Q$.
If this mapping exists, then
we can use supermodularity property of $v$: $v(R\cup\{i\}) - v(R) \le v(Q\cup\{i\}) - v(Q)$,
and the inequality is satisfied.

To define the mapping, we represent sets as a bipartite graph $(X,Y)$ where
vertices in $X$ represent all sets of size $|R|$ and vertices in $Y$
represent all sets of size $|Q|$.
For every $R \subset Q$,
we add an edge between the vertices that correspond to these two sets.
Note that every vertex in $X$ and $Y$ is connected to ${M-|R|-1 \choose M-2|R|-1}$ vertices
(i.e., we are counting the number of sets of size $M-|R|-1$ that contain subset $R$
and where other elements are chosen from the remaining $M-|R|-1$ elements).

Hall's marriage theorem states that there is a matching where every vertex 
$x \in X$ is matched if and only if $|W| \le |N(W)|$ for every $W \subseteq X$ and
where $N(W) \subseteq Y$ denotes neighbors of vertices in $W$. 
We prove that this is the case by construction.
Take a vertex $x\in X$ and initialize $W=\{x\}$.
Note that $|N(W)| = {M-|R|-1 \choose M-2|R|-1}$.
We proceed by expanding $W$ with arbitrary vertices from $X$.
We note that the total number of edges between $W$ and $N(W)$
is always $|W| \times {M-|R|-1 \choose M-2|R|-1}$.
For every vertex $x'$ that is
added there are two possibilities: either $x'$ contains same elements as sets already in $W$, in
which case $N(W) = N(W \cup \{x'\})$, or $N(W) \subset N(W \cup \{x'\})$ as sets that cover the new elements
are added. Suppose we are at the stage
when $|W| = |N(W)|$. We prove that there is no $x' \in X$ s.t.
$|N(W)| = |N(W \cup \{x'\})|$ and, hence, $W$ cannot be expanded
without also expanding $N(W)$.
Note that if $|W| = |N(W)|$ each vertex in $N(W)$ must be connected
to  ${M-|R|-1 \choose M-2|R|-1}$ vertices in $W$ to satisfy the constraint
on the number of edges from a vertex in $Y$ to $X$.
Hence, there cannot be a vertex in $X$ that has edges to $N(W)$ as this would exceed
the per-vertex number of edges constraint.
Hence, $x'$ does not exist.
Since $|X| = |Y|$ and there is a matching where every vertex
in $X$ is matched according to Hall's theorem, there is a mapping between every vertex in $X$ and $Y$.

\paragraph{Computing $\shapRnew(i)$:}
Let us now show that $\shapRnew(i) = \frac{\price_i }{2v(\partiesR)}$,
where we denote the number of coalitions of
size $s$ without party $i$ in the market with $M$ parties using, $n_{s,M}$:
$$n_{s,M} = {M-1 \choose s}$$

Each new coalition in the new marketplace contains $i'$.
The number of such new subsets is 
\begin{eqnarray*}
n_{M,M+1} + \sum^{M-1}_{s=1} n_{s,M+1}  -  n_{s,M} = 1+ \sum^{M-1}_{s=1} n_{s,M+1}  -  n_{s,M} 
\end{eqnarray*}

Since marginal contribution of $i$ in the new coalitions will be
limited to $\price_i$ by the property of $v$,
the overall value that these subsets bring to $\shapR(i)$ is:
\begin{eqnarray}
\shapRnew = \price_i \frac{1}{M+1} \frac{1}{v(\partiesR)} \left(1+ \sum^{M-1}_{s=1} \frac{n_{s,M+1}  -  n_{s,M}}{n_{s,M+1}}\right) = \nonumber\\
\price_i \frac{1}{M+1} \frac{1}{v(\partiesR)} \left(1+\frac{M-1}{2}\right) = \price_i \frac{1}{M+1} \frac{1}{v(\partiesR)} \frac{M+1}{2}=
\frac{\price_i }{2v(\partiesR)} 
 \label{eq:P2}
\end{eqnarray}
(see Claim~\ref{claim:factexp} for the expansion of the subset  combinations).

Then substituting derivations~\ref{eq:P1} and \ref{eq:P2} into $\shapR(i)$, we obtain the statement of the lemma.
%
%\begin{eqnarray*}
%\shapR(i)  \le \frac{\shap(i) v(\parties)}{2(v(\parties) + \price_i)} + \frac{\price_i }{2(v(\parties) + \price_i)} 
%= \frac{\shap(i) v(\parties) + \price_i }{2(v(\parties) + \price_i)} 
%\end{eqnarray*}
\end{proof}

\begin{corollary}
If $\price_i = 0$, then 
$\shapR(i)  \le \frac{\shap(i) v(\parties)}{2v(\parties)}  = \frac{\shap(i)}{2}$. 
\end{corollary}

\begin{lemma}
\label{lemma:robust}
In a collaborative ML market with parties in set $\parties$,
let $\price_i$ be the gain that party $i$ obtains for its own task
by participating in the market.
Let $v$ be the characteristic function s.t.~$v(\parties) \ge \sum_{j\in\parties} \price_j$ and $v(\partiesR) = v(\parties) + \price_i$,
where~$\partiesR$ consists of the same parties as $\parties$ along with a replica of $i$.
If $\shapR \le  \frac{\shap(i) v(\parties, \parties) + \price_i }{2v(\partiesR, \partiesR)}$
for every party $i$,
then the market is robust to replication as per Condition~\ref{cond:cmpshap}.
\end{lemma}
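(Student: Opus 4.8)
The plan is to prove Lemma~\ref{lemma:robust} by a direct algebraic reduction to Condition~\ref{cond:cmpshap}. Recall that Condition~\ref{cond:cmpshap} states that the market is robust to replication precisely when $\shapR(i) \le \frac{\shap(i)\totalval + \price_i}{2(\totalval + \price_i)}$ for the replicating party $i$. So the first step is simply to identify the quantities: by hypothesis $v(\parties,\parties) \ge \sum_{j\in\parties}\price_j$, and in the single-task instantiation $\totalval = \sum_{j\in\parties}\price_j$, so $v(\parties,\parties) \ge \totalval$; moreover $v(\partiesR,\partiesR) = v(\parties,\parties) + \price_i$. The hypothesis of the lemma gives $\shapR(i) \le \frac{\shap(i) v(\parties,\parties) + \price_i}{2 v(\partiesR,\partiesR)} = \frac{\shap(i) v(\parties,\parties) + \price_i}{2(v(\parties,\parties) + \price_i)}$.

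The core of the argument is then the monotonicity claim: the map $x \mapsto \frac{\shap(i)\, x + \price_i}{2(x + \price_i)}$ is non-increasing in $x$ on $x > 0$ whenever $\shap(i) \le 1$. This is immediate from differentiating (or from writing it as $\frac{\shap(i)}{2} + \frac{\price_i(1-\shap(i))}{2(x+\price_i)}$, which makes the dependence on $x$ transparent): since normalized Shapley values satisfy $0 \le \shap(i) \le 1$ and $\price_i \ge 0$, the second term is non-negative and decreasing in $x$. Therefore, since $v(\parties,\parties) \ge \totalval$, we get
\begin{eqnarray*}
\shapR(i) \le \frac{\shap(i) v(\parties,\parties) + \price_i}{2(v(\parties,\parties) + \price_i)} \le \frac{\shap(i)\totalval + \price_i}{2(\totalval + \price_i)},
\end{eqnarray*}
which is exactly Condition~\ref{cond:cmpshap}. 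I would then note that the same bound applies to each party (the hypothesis is stated for every $i$), so no party can profit from replication, completing the proof.

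The main subtlety — and the step I would be most careful about — is making sure the edge cases and the direction of the inequality line up correctly: one needs $\shap(i) \le 1$ for the monotonicity to go the right way (true for normalized Shapley values by efficiency, since all summands are non-negative under monotonicity of $\gain$), and one needs $\price_i \ge 0$ (true since the final model is at least as good as the party's own model by the monotonicity property~(\ref{lbl:prop2}) of $\gain$). A second point worth spelling out is why $v(\parties,\parties) \ge \totalval$ rather than equality: this slack is exactly what the hypothesis $v(\parties) \ge \sum_j \price_j$ provides, and it only helps us because the bound is decreasing in $v(\parties,\parties)$. The remaining work — verifying that the single-task characteristic function~\eqref{eq:charS} and the multi-task one~\eqref{eq:charM} actually satisfy the hypotheses of this lemma and of Lemma~\ref{lemma:shapR} (properties (1)--(3): invariance under adding a replica outside $S$, bounded marginal contribution $\le \price_i$ against coalitions containing the replica, and supermodularity) — is deferred to the instantiation-specific arguments for Theorems~\ref{thm:robustsingle} and~\ref{thm:robustmulti}, so Lemma~\ref{lemma:robust} itself is a short purely-algebraic bridge.
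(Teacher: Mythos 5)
Your proof is correct, and it establishes the same key inequality the paper does, but by a cleaner route. The paper argues by contradiction: it assumes Condition~\ref{cond:cmpshap} fails, combines this with the hypothesized bound on $\shapR(i)$ to obtain $\frac{\shap(i)\totalval+\price_i}{\totalval+\price_i} < \frac{\shap(i)v(\parties)+\price_i}{v(\parties)+\price_i}$, and then, via the substitutions $c=v(\parties)-\totalval\ge 0$, $x=1/\shap(i)\ge 1$ and $p=\frac{\totalval+c+\price_i}{\totalval+\price_i}\ge 1$, reduces this to $px<p+x-1$, contradicting $(p-1)(x-1)\ge 0$. You instead observe directly that $x\mapsto\frac{\shap(i)x+\price_i}{2(x+\price_i)}=\frac{\shap(i)}{2}+\frac{\price_i(1-\shap(i))}{2(x+\price_i)}$ is non-increasing for $\shap(i)\le 1$ and $\price_i\ge 0$, so the hypothesized bound evaluated at $x=v(\parties)\ge\totalval$ is at most the bound required by Condition~\ref{cond:cmpshap} at $x=\totalval$. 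Both arguments hinge on exactly the same facts ($v(\parties)\ge\totalval$, $\price_i\ge 0$, $0\le\shap(i)\le 1$), but your decomposition makes the monotonicity transparent and, unlike the paper's division by $\shap(i)$, does not silently exclude the (harmless) edge case $\shap(i)=0$. Your flagging of where $\shap(i)\le 1$ and $\price_i\ge 0$ come from identifies precisely the assumptions the paper leaves implicit.
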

\begin{proof}
Let us prove the contrary that the market is not robust to replication and hence
Condition~\ref{cond:cmpshap}
is false, i.e., that
$$\shapR(i) > \frac{\shap(i)\totalval + \price_i}{2(\totalval+\price_i)}
$$
where $\totalval = \sum_{j\in\parties} \price_j$.

Substituting $\shapR$ in Condition~\ref{cond:cmpshap}
and using the property that $v(\partiesR) = v(\parties) + \price_i$, we obtain
\begin{eqnarray}
\shap(i)\frac{\totalval}{2(\totalval+\price_i)} + \frac{\price_i}{2(\totalval+\price_i)}&<& \shap(i)\frac{ v(\parties)}{2(v(\parties) + a_i)} + \frac{\price_i }{2(v(\parties) + \price_i)} \nonumber \\ 
\frac{\shap(i)\totalval + \price_i}{\totalval+\price_i}&<& \frac{\shap(i)v(\parties) + \price_i}{v(\parties) + a_i}  \label{eq:cmp1}
\end{eqnarray}

Let $c = v(\parties) - \totalval$, where $c\ge 0$ by the property of $v$. Then, substituting $v(\parties)$:
\begin{eqnarray*}
\frac{\shap(i)\totalval + \price_i}{\totalval+\price_i}&<& \frac{\shap(i)(\totalval+c) + \price_i}{\totalval + c + a_i}  \\
\frac{\totalval + c + a_i}{\totalval+\price_i} &<& \frac{\shap(i)(\totalval+c) + \price_i}{\shap(i)\totalval + \price_i} 
\end{eqnarray*}

Let $x = 1/\shap(i) \ge 1$, then
\begin{eqnarray*}
\frac{\totalval + c + a_i}{\totalval+\price_i} &<& \frac{\totalval+c + x\price_i}{\totalval + x\price_i} 
\end{eqnarray*}

Let $p = \frac{\totalval + c + a_i}{\totalval+\price_i}$, then $\totalval + c + \price_i = p (\totalval+\price_i)$.
Since $c \ge 0$, $\price_i \ge 0$, $p\ge 1$.
We need to show that
\begin{eqnarray*}
p < \frac{\totalval+c + x\price_i}{\totalval + x\price_i} = \frac{p(\totalval+\price_i) - \price_i + x\price_i}{\totalval + x\price_i}\\
p (\totalval + x\price_i)< {p(\totalval+\price_i) - \price_i + x\price_i}\\
px\price_i< {p\price_i - \price_i + x\price_i}\\
px < p + x -1
\end{eqnarray*}
However, $px \ge p + x -1$ for $p,x \ge 1$
and we arrive to contradiction.
\end{proof}

\begin{claim}
$ \sum^{M-1}_{s=1}  \frac{n_{s,M+1}  -  n_{s,M}}{n_{s,M+1}} = 
 \frac{1}{M} \frac{(M-1)M}{2} =\frac{M-1}{2}$.
 \label{claim:factexp}
\end{claim}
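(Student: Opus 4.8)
The plan is to reduce the claim to the elementary arithmetic series $\sum_{s=1}^{M-1} s = \frac{(M-1)M}{2}$ by simplifying each summand into closed form. Recall that the paper defines $n_{s,M} = \binom{M-1}{s}$, so in particular $n_{s,M+1} = \binom{M}{s}$. First I would rewrite each term as $\frac{n_{s,M+1} - n_{s,M}}{n_{s,M+1}} = 1 - \frac{n_{s,M}}{n_{s,M+1}} = 1 - \frac{\binom{M-1}{s}}{\binom{M}{s}}$, which isolates a single binomial ratio that is the same for every $s$ up to its $s$-dependence.

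The key step is evaluating this ratio. Using the factorial definitions directly, or equivalently the absorption identity $\binom{M}{s} = \frac{M}{M-s}\binom{M-1}{s}$ (valid for $1 \le s \le M-1$), one obtains $\frac{\binom{M-1}{s}}{\binom{M}{s}} = \frac{M-s}{M}$. Substituting back gives $\frac{n_{s,M+1} - n_{s,M}}{n_{s,M+1}} = 1 - \frac{M-s}{M} = \frac{s}{M}$, so each summand collapses to a linear term in $s$.

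Finally I would carry out the summation: $\sum_{s=1}^{M-1} \frac{s}{M} = \frac{1}{M}\sum_{s=1}^{M-1} s = \frac{1}{M}\cdot\frac{(M-1)M}{2} = \frac{M-1}{2}$, which is exactly the claimed identity and reproduces the intermediate form $\frac{1}{M}\frac{(M-1)M}{2}$ written in the statement. There is no genuine obstacle here; the only point demanding care is the binomial-ratio simplification, and one should record that the index ranges only over $1 \le s \le M-1$, ensuring that every binomial coefficient involved is well defined and nonzero so that the divisions are legitimate.
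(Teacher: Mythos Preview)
Your proposal is correct and follows essentially the same approach as the paper: both rewrite each summand as $1 - \binom{M-1}{s}/\binom{M}{s}$, simplify this ratio to $(M-s)/M$ so that the term becomes $s/M$, and then apply the arithmetic-series formula. The only cosmetic difference is that the paper expands the ratio via factorials while you invoke the absorption identity, but these are the same computation.
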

\begin{proof}
\begin{eqnarray*}
 \frac{(n_{s,M+1}  -  n_{s,M} )}{n_{s,M+1}} =  \frac{{M\choose s} - {M-1 \choose s}}{{M\choose s}} = 1 -  \frac{{M-1 \choose s}}{{M\choose s}} = 1 -  \frac{(M-1)!s!(M-s)!}{s!(M-s-1)!M!}
= 1 -  \frac{M-s}{M} = \frac{s}{M}
\end{eqnarray*}

and
\begin{eqnarray*}
 \sum^{M-1}_{s=1}  \frac{{M\choose s} - {M-1 \choose s}}{{M\choose s}}) =  \sum^{M-1}_{s=1} \frac{s}{M} = \frac{1}{M} \frac{(M-1)M}{2} =\frac{M-1}{2} 
\end{eqnarray*}
\end{proof}

\section{Single Validation Task: Proof Details}

\begin{lemma}
\label{lemma:supermS}
Let $v$ be the characteristic function as defined in Equation~\ref{eq:charS}.
Then if $\gain$ is monotonic and supermodular, $v$ is supermodular.
\end{lemma}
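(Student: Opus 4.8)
The plan is to reduce supermodularity of $\vsing$ to monotonicity (Property~(\ref{lbl:prop2})) and supermodularity (Property~(\ref{lbl:prop3})) of $\gain$ by a direct computation of marginal contributions. Writing $g(S) := \gain(\valdata, \model_S)$, Equation~\ref{eq:charS} rearranges to the compact form $\vsing(S) = (1+|S|)\,g(S) - \sum_{j \in S} g(\{j\})$, since the ``value of the model'' term contributes one copy of $g(S)$ and each of the $|S|$ parties contributes $g(S) - g(\{j\})$. Note that $g$, as a set function on parties, inherits monotonicity and supermodularity from $\gain$: the map $S \mapsto \cat_{k \in S}\trdata_k$ is monotone for inclusion (treating $\cat$ as union, which is justified by Property~(\ref{lbl:prop1})), and supermodularity in single samples lifts to adding a whole dataset $\trdata_x$ by telescoping over its samples.

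First I would compute the marginal contribution of a party $x \notin S$. A one-line calculation gives
\[
  \vsing(S \cup \{x\}) - \vsing(S) = (|S|+1)\bigl(g(S \cup \{x\}) - g(S)\bigr) + g(S \cup \{x\}) - g(\{x\}),
\]
the point being that the sums $-\sum_{j} g(\{j\})$ almost cancel and leave only the $-g(\{x\})$ coming from the newly added party.

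Next I would compare this marginal contribution across nested sets $S \subseteq T$ with $x \notin T$. The common $-g(\{x\})$ cancels, so it suffices to prove
\[
  (|S|+1)\bigl(g(S \cup \{x\}) - g(S)\bigr) + g(S \cup \{x\}) \;\le\; (|T|+1)\bigl(g(T \cup \{x\}) - g(T)\bigr) + g(T \cup \{x\}),
\]
which I would establish term by term. For the second terms, $g(S \cup \{x\}) \le g(T \cup \{x\})$ since $S \cup \{x\} \subseteq T \cup \{x\}$ and $g$ is monotone. For the first terms, $|S|+1 \le |T|+1$, the marginal gains $g(\cdot \cup \{x\}) - g(\cdot)$ are non-negative by monotonicity, and $g(S \cup \{x\}) - g(S) \le g(T \cup \{x\}) - g(T)$ by supermodularity; multiplying the smaller coefficient by the smaller non-negative marginal shows the left product is dominated by the right. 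Adding the two inequalities gives the claim, hence $\vsing$ is supermodular.

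There is no real obstacle here --- the argument is a routine computation. The only points that deserve an explicit sentence are the lifting of Property~(\ref{lbl:prop3}) from single samples to whole datasets (telescoping, using monotonicity to keep the intermediate marginals ordered) and the observation that $S \subseteq T \Rightarrow \cat_{k\in S}\trdata_k \subseteq \cat_{k\in T}\trdata_k$, which relies on Property~(\ref{lbl:prop1}) so that concatenation behaves like union.
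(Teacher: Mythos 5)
Your proof is correct and follows essentially the same route as the paper's: both reduce to the identity $\vsing(S\cup\{x\})-\vsing(S)=(|S|+1)\bigl(g(S\cup\{x\})-g(S)\bigr)+g(S\cup\{x\})-g(\{x\})$ and then compare the two terms across $S\subseteq T$ using monotonicity for the lone $g(\cdot\cup\{x\})$ term and the product of ordered non-negative factors for the coefficiented marginal gain. Your explicit remark about lifting Property~(\ref{lbl:prop3}) from single samples to whole party datasets (and treating $\cat$ as union via Property~(\ref{lbl:prop1})) is a point the paper leaves implicit, but it does not change the argument.
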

\begin{proof}
We want to show that for sets $R \subset Q, i \not\in Q$, $v(R\cup\{i\}) - v(R) \le v(Q\cup\{i\}) - v(Q)$.
Recall that $v(S)$ is defined as
$v(S) =
{\gain\big(\valdata; \model_{S})}+ \sum_{j \in S} \big[ {\gain\big(\valdata; \model_{S}) - \gain\big(\valdata; \model_j)} \big] $.
Then
$v(S\cup \{i\}) - v(S)$ is
\begin{eqnarray*}
\gain\big(\valdata; \model_{S\cup \{i\}})+ \sum_{j \in S\cup \{i\}} \big[ {\gain\big(\valdata; \model_{S\cup \{i\}}) - \gain\big(\valdata; \model_j)} \big]
-
{\gain\big(\valdata; \model_{S})}- \sum_{j \in S} \big[ {\gain\big(\valdata; \model_{S}) - \gain\big(\valdata; \model_j)} \big] = \\
(|S| + 1) [ \gain\big(\valdata; \model_{S\cup \{i\}})  - \gain\big(\valdata; \model_{S}) ]
+  \big[ {\gain\big(\valdata; \model_{S\cup \{i\}}) - \gain\big(\valdata; \model_i)} \big].
\end{eqnarray*}
%
%Since $\gain$ is monotonic, $v(R\cup\{i\}) - v(R) \le v(Q\cup\{i\}) - v(Q)$ can be written as follows:
%\begin{eqnarray*}
%(|R| + 1) [ \gain\big(\valdata; \model_{R\cup \{i\}})  - \gain\big(\valdata; \model_{R}) ]
% \le 
%(|Q| + 1) [ \gain\big(\valdata; \model_{Q\cup \{i\}})  - \gain\big(\valdata; \model_{Q}) ]\\
%%(|R| + 1) [ \gain\big(\valdata; \model_{R\cup \{i\}})  - \gain\big(\valdata; \model_{R}) ]
%% \le 
%%(M-|R|) [ \gain\big(\valdata; \model_{Q\cup \{i\}})  - \gain\big(\valdata; \model_{Q}) ]
%\end{eqnarray*}
Since $\gain$ is monotone, non-negative, supermodular and $R \subset Q$,
$$(|R| + 1) [ \gain\big(\valdata; \model_{R\cup \{i\}})  - \gain\big(\valdata; \model_{R}) ]
 \le 
(|Q| + 1) [ \gain\big(\valdata; \model_{Q\cup \{i\}})  - \gain\big(\valdata; \model_{Q}) ].$$
Because of monotonicity, this implies
$$(|R| + 1) [ \gain\big(\valdata; \model_{R\cup \{i\}})  - \gain\big(\valdata; \model_{R}) ]
 \le 
(|Q| + 1) [ \gain\big(\valdata; \model_{Q\cup \{i\}})  - \gain\big(\valdata; \model_{Q}) ] + \underbrace{\gain\big(\valdata; \model_{Q\cup \{i\}}) - \gain\big(\valdata; \model_{R\cup \{i\}})}_{\geq 0}.$$
%Since $\gain$ is monotonic, 
Rearranging terms and comparing with the equation above yields the statement of the lemma.
%$v(R\cup\{i\}) - v(R) \le v(Q\cup\{i\}) - v(Q)$ can be written as follows:
%\begin{eqnarray*}
%(|R| + 1) [ \gain\big(\valdata; \model_{R\cup \{i\}})  - \gain\big(\valdata; \model_{R}) ]
% \le 
%(|Q| + 1) [ \gain\big(\valdata; \model_{Q\cup \{i\}})  - \gain\big(\valdata; \model_{Q}) ]\\
%%(|R| + 1) [ \gain\big(\valdata; \model_{R\cup \{i\}})  - \gain\big(\valdata; \model_{R}) ]
%% \le 
%%(M-|R|) [ \gain\big(\valdata; \model_{Q\cup \{i\}})  - \gain\big(\valdata; \model_{Q}) ]
%\end{eqnarray*}
%The inequality holds, as long as $\gain$ is supermodular since $R \subset Q$.
\end{proof}

\begin{lemma}
%\begin{lemma}
\label{lemma:margcontrib}
Let $v$ be the characteristic function as defined in Equation~\ref{eq:charS}
where  $\gain$'s value does not change with replicated data,
and $i'$ be a replica party of $i$. Then
$v(S \cup \{i\})- v(S) \le \price_i$ for any set $S$ when $i' \in S$.
%\end{lemma}
\end{lemma}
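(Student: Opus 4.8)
The plan is to compute the marginal contribution $v(S\cup\{i\}) - v(S)$ directly from the definition in Equation~\ref{eq:charS}, collapse it to a single gain difference using the replica hypothesis, and then bound that difference by $\price_i$ via monotonicity of $\gain$. So the proof is essentially a short calculation followed by one inequality; there is no deep step, only careful bookkeeping with the definition of $v$.

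First I would expand $v(S\cup\{i\}) - v(S)$. Writing $v(S) = (|S|+1)\,\gain(\valdata;\model_S) - \sum_{j\in S}\gain(\valdata;\model_j)$ and subtracting, one obtains (the same grouping used in the proof of Lemma~\ref{lemma:supermS})
$$v(S\cup\{i\}) - v(S) = (|S|+1)\big[\gain(\valdata;\model_{S\cup\{i\}}) - \gain(\valdata;\model_S)\big] + \big[\gain(\valdata;\model_{S\cup\{i\}}) - \gain(\valdata;\model_i)\big].$$
Next I would invoke the replica assumption: since $i'\in S$ and $\trdata_{i'}$ is a copy of $\trdata_i$, the data $\cat_{k\in S\cup\{i\}}\trdata_k$ is just $\cat_{k\in S}\trdata_k$ with an additional copy of a dataset already present, so by property~(\ref{lbl:prop1}) (replicated training data does not change the learned model) we have $\model_{S\cup\{i\}} = \model_S$, hence $\gain(\valdata;\model_{S\cup\{i\}}) = \gain(\valdata;\model_S)$. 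This annihilates the first bracket and reduces the difference to $\gain(\valdata;\model_S) - \gain(\valdata;\model_i)$.

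Finally I would bound $\gain(\valdata;\model_S)$ by $\gain(\valdata;\model_\parties)$: after discarding the duplicate $\trdata_{i'}$, the datasets indexed by $S$ are all among those indexed by $\parties$, so $\cat_{k\in S}\trdata_k$ is contained (up to duplicates, irrelevant by property~(\ref{lbl:prop1})) in $\cat_{k\in\parties}\trdata_k$, and monotonicity of $\gain$ gives $\gain(\valdata;\model_S)\le\gain(\valdata;\model_\parties)$. Therefore $v(S\cup\{i\}) - v(S) = \gain(\valdata;\model_S) - \gain(\valdata;\model_i) \le \gain(\valdata;\model_\parties) - \gain(\valdata;\model_i) = \price_i$, which is the claim; the case $i\in S$ is immediate since then the marginal contribution is $0\le\price_i$. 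The only point requiring a little care — and the closest thing to an obstacle — is justifying that adding $i$'s dataset to an $S$ that already contains other parties' data is genuinely a no-op for $\model$; but this is exactly what property~(\ref{lbl:prop1}) asserts, since $\model$ depends only on the underlying multiset of samples, so no new argument beyond that assumption is needed.
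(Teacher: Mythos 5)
Your proof is correct and follows essentially the same route as the paper's: expand the definition of $v$, use the replication property to get $\model_{S\cup\{i\}}=\model_S$ so the marginal contribution collapses to $\gain(\valdata;\model_S)-\gain(\valdata;\model_i)$, and bound this by $\price_i$. If anything, you are slightly more explicit than the paper in justifying the final inequality $\gain(\valdata;\model_S)\le\gain(\valdata;\model_\parties)$ via monotonicity, which the paper leaves implicit.
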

\begin{proof}
Let us expand $v(S \cup \{i\})$ using the property of the gain function that states that replication
does not change its value (i.e., $\gain(\valdata; \model_{S\cup \{i\}}) = \gain(\valdata; \model_{S})$):
\begin{eqnarray*}
v(S\cup \{i\}) = &&
{\gain\big(\valdata; \model_{S\cup \{i\}})}+ \sum_{j \in S\cup \{i\}} \big[ {\gain\big(\valdata; \model_{S\cup \{i\}}) - \gain\big(\valdata; \model_j)} \big] = \\
&&{\gain\big(\valdata; \model_{S})}+ \sum_{j \in S \cup\{i\}} \big[ {\gain\big(\valdata; \model_{S}) - \gain\big(\valdata; \model_j)} \big] 
\end{eqnarray*}
Then
\begin{eqnarray*}
v(S \cup \{i\})&-& v(S)  \nonumber \\
&=&  {\gain\big(\valdata; \model_{S})}+ \sum_{j \in S  \cup\{i\}} \big[ {\gain\big(\valdata; \model_{S}) - \gain\big(\valdata; \model_j)} \big] \\
& & - {\gain\big(\valdata; \model_{S})}- \sum_{j \in S} \big[ {\gain\big(\valdata; \model_{S}) - \gain\big(\valdata; \model_j)} \big] \\
&=& {\gain\big(\valdata; \model_{S}) - \gain\big(\valdata; \model_i)}\le \price_i 
\end{eqnarray*} 
\end{proof}

\thmrobustsingle
\begin{proof}
Single task market uses a characteristic function $v$ that satisfies properties of
Lemma~\ref{lemma:shapR}:
(1) $v(S,\parties) = v(S,\partiesR)$, $\forall S\subseteq \parties$,
since $v$ depends only on $S$ and not the second argument.
(2)  $v(S \cup\{i\},\partiesR) - v(S,\partiesR) \le a_i$ holds
as per Lemma~\ref{lemma:margcontrib} where second argument of $v$ is
omitted to simplify the notation.
(3) $v$ is supermodular as shown in Lemma~\ref{lemma:supermS}.
Hence, $\shapR$ is bounded as in Lemma~\ref{lemma:shapR}.
Furthermore, $\vsing$ satisfies properties of Lemma~\ref{lemma:robust}:
since $\vsing(S) =  {\gain\big(\valdata; \model_{S})} + \sum_{j \in S} \big[{\gain\big(\valdata; \model_S) - \gain\big(\valdata; \model_j)} \big]$
and $\price_i = \gain(\valdata, \model_{\parties}) - \gain(\valdata, \model_{i})$,
$v(\parties) =  {\gain\big(\valdata; \model_{S})} + \sum_{j\in\parties} \price_j \ge  \sum_{j\in\parties} \price_j$.
The second property, $v(\partiesR) = v(\parties) + \price_i$, holds due to the property of $\gain$ on replicated data:
$$\vsing(\partiesR) =  {\gain\big(\valdata; \model_{\partiesR})} + \sum_{j \in \partiesR} \big[{\gain\big(\valdata; \model_{\partiesR}) - \gain\big(\valdata; \model_j)} \big] =
{\gain\big(\valdata; \model_{\parties})} + \sum_{j \in \partiesR} \big[{\gain\big(\valdata; \model_\parties) - \gain\big(\valdata; \model_j)} \big] = \vsing(\parties) + \price_i.$$

Single task market instantiation satisfies properties of Lemma~\ref{lemma:robust}
on $v$ and $\shapR$ and, hence, is robust to replication.
\end{proof}

\section{Multiple Validation Tasks: Proof Details}

\begin{lemma}
\label{lemma:supermM}
Let $w$ be the characteristic function as defined in Equation~\ref{eq:charM}.
Then if $\gain$ is monotonic and supermodular, $w$ is supermodular.
\end{lemma}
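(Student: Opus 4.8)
The plan is to mirror the proof of Lemma~\ref{lemma:supermS}: write the discrete derivative $w(S\cup\{j\},\parties) - w(S,\parties)$ in closed form and then compare it term by term for two coalitions $R \subseteq Q \subseteq \parties$ with $j \notin Q$. Throughout, the party set $\parties$ (hence $\partiesD$ and each relevant-data set $\mathcal{D}_i$) is held fixed — supermodularity is asserted in the first argument of $w$ — and we use the shorthand $\model^i_S = \model_{S \cap \mathcal{D}_i}$ from \S\ref{sec:modeltrain}. Applying the lemma later (e.g.\ inside Lemma~\ref{lemma:shapR}) with $\parties$ replaced by the replicated set $\partiesR$ then gives supermodularity where it is needed.

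First I would record an auxiliary observation: for each fixed $i$, the set function $S \mapsto \gain(\valdata_i,\model^i_S)$ on $2^\parties$ is monotone and supermodular. Monotonicity is immediate since $S \mapsto S\cap\mathcal{D}_i$ is monotone and $\gain(\valdata_i,\model(\cdot))$ is monotone by \S\ref{sec:mlbg}. For supermodularity, take $R \subseteq Q$, $j \notin Q$: if $j \notin \mathcal{D}_i$ both marginals vanish, and if $j \in \mathcal{D}_i$ the marginal equals $\gain(\valdata_i,\model_{(R\cap\mathcal{D}_i)\cup\{j\}}) - \gain(\valdata_i,\model_{R\cap\mathcal{D}_i})$, which by supermodularity of $\gain(\valdata_i,\model(\cdot))$ applied to $R\cap\mathcal{D}_i \subseteq Q\cap\mathcal{D}_i$ is at most the corresponding marginal at $Q$. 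So the per-task restriction to $\mathcal{D}_i$ preserves both properties.

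Next, for $j \notin S$ I would expand
\begin{align*}
w(S\cup\{j\},\parties) - w(S,\parties) &= \sum_{i\in\partiesD}\big[\gain(\valdata_i,\model^i_{S\cup\{j\}}) - \gain(\valdata_i,\model^i_S)\big] \\
&\quad + \sum_{i\in S}\big[\gain(\valdata_i,\model^i_{S\cup\{j\}}) - \gain(\valdata_i,\model^i_S)\big] \\
&\quad + \big[\gain(\valdata_j,\model^j_{S\cup\{j\}}) - \gain(\valdata_j,\model^j_j)\big],
\end{align*}
calling the three terms $(A)_S$, $(B)_S$, $(C)_S$. For $(A)$, each summand is a marginal of the monotone supermodular function from the auxiliary observation, so summand-wise $(A)_R \le (A)_Q$ over the fixed set $\partiesD$. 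For $(B)$, the index set grows from $R$ to $Q$: for $i\in R$ the summand is nondecreasing by the same supermodularity, and each extra summand for $i\in Q\setminus R$ is nonnegative by monotonicity, hence $(B)_R \le (B)_Q$. For $(C)$, the term $\gain(\valdata_j,\model^j_j)$ is independent of $S$ and $\gain(\valdata_j,\model^j_{R\cup\{j\}}) \le \gain(\valdata_j,\model^j_{Q\cup\{j\}})$ by monotonicity, so $(C)_R \le (C)_Q$. Summing the three inequalities yields $w(R\cup\{j\},\parties) - w(R,\parties) \le w(Q\cup\{j\},\parties) - w(Q,\parties)$, i.e.\ supermodularity of $w$.

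There is no genuinely hard step here; the only points requiring care are bookkeeping ones — that the summation index set of the second term of $w$ depends on $S$ (handled by monotonicity of $\gain$ on the extra terms indexed by $Q\setminus R$), and that the restriction $S \mapsto S\cap\mathcal{D}_i$ hidden inside $\model^i_S$ does not destroy monotonicity or supermodularity (the auxiliary observation). Everything else is a direct transcription of the single-task argument of Lemma~\ref{lemma:supermS} with one added, fixed sum over the distinct tasks $\partiesD$.
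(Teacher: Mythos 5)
Your proof is correct and follows essentially the same route as the paper's: expand the marginal $w(S\cup\{j\})-w(S)$ into the same three pieces, bound the first two by supermodularity (plus monotonicity for the extra summands indexed by $Q\setminus R$) and the last by monotonicity. You are in fact somewhat more careful than the paper, which glosses over both the growing index set in the second sum and the restriction $S\mapsto S\cap\mathcal{D}_i$ hidden in $\model^i_S$; your auxiliary observation handles the latter cleanly.
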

\begin{proof}
We want to show that for sets $R \subset Q$, $w(R\cup\{i\}) - w(R) \le w(Q\cup\{i\}) - w(Q)$,
where the second argument is omitted since $w$ has the same dependency on $\parties$ for all
sets in the same market instantiation.

Recall that $w$ is defined as:
$$w(S\cup \{i\}) = {\sum_{j\in \partiesD} \gain\big(\valdata_j; \model^j_{S\cup\{i\}})}
+ \sum_{j \in S\cup\{i\}} \big[{\gain\big(\valdata_j; \model^j_{S\cup\{i\}}) - \gain\big(\valdata_j; \model^j_{j})} \big]$$
Then
$w(S\cup \{i\}) - w(S)$ is:
\begin{eqnarray*}
{\sum_{j\in \partiesD} \gain\big(\valdata_j; \model^j_{S\cup\{i\}})}
+ \sum_{j \in S\cup\{i\}} \big[{\gain\big(\valdata_j; \model^j_{S\cup\{i\}}) - \gain\big(\valdata_j; \model^j_{j})} \big]
-
{\sum_{i\in \partiesD} \gain\big(\valdata_i; \model^i_{S})}
- \sum_{i \in S} \big[{\gain\big(\valdata_i; \model^i_{S}) - \gain\big(\valdata_i; \model^i_{i})} \big]
 = \\
\sum_{j\in \partiesD}(\gain\big(\valdata_j; \model^j_{S\cup\{i\}})
- \gain\big(\valdata_i; \model^i_{S}))
+
 \sum_{j \in S} \big[\gain\big(\valdata_j; \model^j_{S\cup\{i\}})-
 \gain\big(\valdata_i; \model^i_{S})\big]
 +  
 \big[{\gain\big(\valdata_i; \model^i_{S\cup\{i\}}) - \gain\big(\valdata_i; \model^i_{i})} \big]
\end{eqnarray*}

Hence, $v(R\cup\{i\}) - v(R) \le v(Q\cup\{i\}) - v(Q)$ can be written as follows:
\begin{eqnarray*}
\sum_{j\in \partiesD}(\gain\big(\valdata_j; \model^j_{R\cup\{i\}})
- \gain\big(\valdata_i; \model^i_{R}))
+
 \sum_{j \in R} \big[\gain\big(\valdata_j; \model^j_{R\cup\{i\}})-
 \gain\big(\valdata_i; \model^i_{R})\big]
 +  
 \big[{\gain\big(\valdata_i; \model^i_{R\cup\{i\}}) } \big]
 \le \\
 \sum_{j\in \partiesD}(\gain\big(\valdata_j; \model^j_{Q\cup\{i\}})
- \gain\big(\valdata_i; \model^i_{Q}))
+
 \sum_{j \in Q} \big[\gain\big(\valdata_j; \model^j_{Q\cup\{i\}})-
 \gain\big(\valdata_i; \model^i_{Q})\big]
 +  
 \big[{\gain\big(\valdata_i; \model^i_{Q\cup\{i\}}) } \big]
\end{eqnarray*}
Since $\gain$ is monotonic:
\begin{eqnarray*}
\sum_{j\in \partiesD}(\gain\big(\valdata_j; \model^j_{R\cup\{i\}})
- \gain\big(\valdata_i; \model^i_{R}))
+
 \sum_{j \in R} \big[\gain\big(\valdata_j; \model^j_{R\cup\{i\}})-
 \gain\big(\valdata_i; \model^i_{R})\big]
 \le \\
 \sum_{j\in \partiesD}(\gain\big(\valdata_j; \model^j_{Q\cup\{i\}})
- \gain\big(\valdata_i; \model^i_{Q}))
+
 \sum_{j \in Q} \big[\gain\big(\valdata_j; \model^j_{Q\cup\{i\}})-
 \gain\big(\valdata_i; \model^i_{Q})\big]
\end{eqnarray*}
The inequality holds, as long as $\gain$ is supermodular.
\end{proof}

\begin{lemma}
\label{lemma:margincontribM}
Let $w$ be the characteristic function as defined in Equation~\ref{eq:charM}
where $\gain$'s value does not change with replicated data,
and $i'$ be a replica party of $i$. Then
$w(S \cup \{i\},\parties)- w(S,\parties) \le \price_i$ for any set $S\subset \parties$ when $i' \in S$.
\end{lemma}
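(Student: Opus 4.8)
The plan is to mirror the single-task argument of Lemma~\ref{lemma:margcontrib}, exploiting that adding a party $i$ whose replica $i'$ already lies in $S$ changes none of the trained models and therefore leaves the ``value of models for distinct tasks'' term of Equation~\ref{eq:charM} untouched. First I would expand $\vmult(S \cup \{i\}, \parties)$ via its definition and invoke the hypothesis that $\gain$'s value is invariant under data replication: since $\trdata_i = \trdata_{i'}$ and $i' \in S$, the combined training data $\cat_{k \in S \cup \{i\}} \trdata_k$ underlying each customized model $\model^j_{S \cup \{i\}}$ is merely a replication of the data underlying $\model^j_S$, so $\gain(\valdata_j; \model^j_{S\cup\{i\}}) = \gain(\valdata_j; \model^j_S)$ for every distinct task $j \in \partiesD$ and every $j \in S$.

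Next I would take the difference $\vmult(S\cup\{i\},\parties) - \vmult(S,\parties)$ term by term. The key observation is that the distinct-task index set $\partiesD$ is defined over the fixed full party set $\parties$ and is therefore unaffected by whether $i$ sits in $S$; moreover $i$ and $i'$ share the same validation task $\valdata_i = \valdata_{i'}$, so $i$ introduces no new distinct task. Hence after the replication substitution the first sum $\sum_{j\in\partiesD}\gain(\valdata_j;\model^j_{\cdot})$ is identical in both expressions and cancels, and in the second sum the terms indexed by $j \in S$ match exactly (note $\model^j_j$ is independent of $S$). The only surviving contribution is the freshly added $j = i$ term, giving
\begin{eqnarray*}
\vmult(S\cup\{i\},\parties) - \vmult(S,\parties) = \gain(\valdata_i; \model^i_{S}) - \gain(\valdata_i; \model^i_{i}).
\end{eqnarray*}

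Finally I would close the bound using monotonicity of $\gain$: since $S \subseteq \parties$ (and thus the relevant data $S \cap \mathcal{D}_i \subseteq \mathcal{D}_i$), we have $\gain(\valdata_i; \model^i_S) \le \gain(\valdata_i; \model^i_\parties)$, so the difference is at most $\gain(\valdata_i; \model^i_\parties) - \gain(\valdata_i; \model^i_{i}) = \price_i$, which is exactly the claim.

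The main obstacle I anticipate is justifying the clean cancellation of the distinct-tasks term, because the customized models $\model^j$ are deliberately trained to perform poorly on tasks other than $j$, so one must verify that replication invariance applies uniformly to all of these models rather than only to $\model^i$. This is handled by reading the lemma hypothesis (``$\gain$'s value does not change with replicated data'') as a statement about each customized model's gain evaluated on its associated validation task; together with the fact that $\partiesD$ and the task assignment are pinned down by $\parties$ and unchanged by moving $i$ into $S$, this guarantees both that no new distinct task appears and that every retained gain term equals its pre-replication counterpart, leaving only the single $j=i$ term to bound.
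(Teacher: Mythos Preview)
Your proposal is correct and follows essentially the same approach as the paper: expand $\vmult(S\cup\{i\})$ and $\vmult(S)$, use replication invariance of $\gain$ to make every $\gain(\valdata_j;\model^j_{S\cup\{i\}})$ collapse to $\gain(\valdata_j;\model^j_S)$, cancel the distinct-tasks sum and the $j\in S$ terms, and be left with the single $j=i$ contribution. The one place where you are actually more careful than the paper is the closing step: the paper's proof writes the surviving term as equal to $\price_i$, while you correctly invoke monotonicity of $\gain$ (since $S\subseteq\parties$) to obtain $\gain(\valdata_i;\model^i_S)-\gain(\valdata_i;\model^i_i)\le\gain(\valdata_i;\model^i_\parties)-\gain(\valdata_i;\model^i_i)=\price_i$, which is what the lemma statement asserts.
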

\begin{proof}
We omit the second argument of $w$ to simplify the notation.
Consider the following values of $w$ where we use the property of the gain function that states that replication
does not change its value (i.e., $\gain(\valdata; \model_{S\cup \{i\}}) = \gain(\valdata; \model_{S})$):
  \begin{eqnarray*}w(S  \cup i) &=&  \sum_{j\in \partiesD} \gain\big(\valdata_j; \model^j_{(S,i)})
 + \sum_{j \in S,i} \big[{\gain\big(\valdata_j; \model^j_{(S,i)}) - \gain\big(\valdata_j; \model^j_{j})}\big]\\
  w(S  \cup i,i') &=&  \sum_{j\in \partiesD} \gain\big(\valdata_j; \model^j_{(S,i,i')})
 + \sum_{j \in S,i,i'} \big[{\gain\big(\valdata_j; \model^j_{(S,i,i')}) - \gain\big(\valdata_j; \model^j_{j})}\big]  \\
&=&   \sum_{j\in \partiesD} \gain\big(\valdata_j; \model^j_{(S,i)})
 + \sum_{j \in S,i,i'} \big[{\gain\big(\valdata_j; \model^j_{(S,i)}) - \gain\big(\valdata_j; \model^j_{j})}\big] 
  \end{eqnarray*}

Then
  \begin{eqnarray*}
  w(S  \cup i,i') - w(S  \cup i)  &=&  \\
& &   \sum_{j\in \partiesD} \gain\big(\valdata_j; \model^j_{(S,i)})
 + \sum_{j \in S,i,i'} \big[{\gain\big(\valdata_j; \model^j_{(S,i)}) - \gain\big(\valdata_j; \model^j_{j})}\big] \\
& &  -   \sum_{j\in \partiesD} \gain\big(\valdata_j; \model^j_{(S,i)})
 - \sum_{j \in S,i} \big[{\gain\big(\valdata_j; \model^j_{(S,i)}) - \gain\big(\valdata_j; \model^j_{j})}\big] = \\
& &  \big[{\gain\big(\valdata_i; \model^j_{(S,i)}) - \gain\big(\valdata_i; \model^j_{i})}\big] = \price_i
  \end{eqnarray*}
\end{proof}

\thmrobustmult
\begin{proof}
Multiple task market uses a characteristic function $w$ that satisfies properties of
Lemma~\ref{lemma:shapR}:
(1) $w(S,\parties) = w(S,\partiesR)$, $\forall S\subseteq \parties$,
since $w$ depends only on $S$ and distinct validation tasks that are the same for $\partiesR$ and $\parties$
since $\valdata_{i'} = \valdata_{i}$.
(2)  $w(S \cup\{i\},\partiesR) - w(S,\partiesR) \le a_i$ holds
as per Lemma~\ref{lemma:margincontribM}.
(3) $w$ is supermodular as shown in Lemma~\ref{lemma:supermM}.
Hence, $\shapR$ is bounded as in Lemma~\ref{lemma:shapR}.
Furthermore, $w$ satisfies properties of Lemma~\ref{lemma:robust}:
since
$$\vmult(S) ={\sum_{i\in \partiesD} \gain\big(\valdata_i; \model^i_{S})}
+ \sum_{i \in S} \big[{\gain\big(\valdata_i; \model^i_{S}) - \gain\big(\valdata_i; \model^i_{i})} \big]$$
and $\price_i ={\gain\big(\valdata_i; \model^i_{S}) - \gain\big(\valdata_i; \model^i_{i})}$,
$\vmult(\parties) =  {\sum_{i\in \partiesD} \gain\big(\valdata_i; \model^i_{S})} + \sum_{j\in\parties} \price_j \ge  \sum_{j\in\parties} \price_j$.

The second property, $\vmult(\partiesR) = \vmult(\parties) + \price_i$, holds due to the property of $\gain$ on replicated data:
\begin{eqnarray*}\vmult(\partiesR) = 
{\sum_{i\in \partiesD} \gain\big(\valdata_i; \model^i_{\partiesR})}
+ \sum_{i \in \partiesR} \big[{\gain\big(\valdata_i; \model^i_{\partiesR}) - \gain\big(\valdata_i; \model^i_{i})} \big]
=\\
{\sum_{i\in \partiesD} \gain\big(\valdata_i; \model^i_{\parties})}
+ \sum_{i \in \partiesR} \big[{\gain\big(\valdata_i; \model^i_{\parties}) - \gain\big(\valdata_i; \model^i_{i})} \big]
=
\vmult(\parties) + \price_i.
\end{eqnarray*}

Multiple task market instantiation satisfies properties of Lemma~\ref{lemma:robust}
on $w$ and $\shapR$ and, hence, is robust to replication.
\end{proof}

\end{document}